\newcounter{blubber}
\newenvironment{longitemslist}
{\begin{list}
  {\textsf{\textbf{\arabic{blubber}.}}}
  {\usecounter{blubber}
   \setlength{\leftmargin}{0pt}
    \setlength{\parsep}{0pt}
    \setlength{\itemindent}{4ex}
    \setlength{\itemsep}{2pt}
  }
}
{\end{list}}
\newenvironment{defn}[1][]{\begin{definition}[#1]\upshape}{\end{definition}}
\newcommand{\takeout}[1]{\empty}
\newcommand{\myparagraph}[1]{\medskip\par\noindent\textbf{\textsf{#1}}\hspace{6pt}}
\newcommand{\nth}{\ensuremath{n^\text{th}}}
\DeclareMathOperator{\Id}{Id}
\DeclareMathOperator{\id}{id}
\DeclareMathOperator{\inl}{inl}
\newcommand{\T}{\ensuremath{\mathcal T}\xspace}
\newcommand{\cat}[1]{\ensuremath{\mathbf{#1}}\xspace}
\newcommand\Set{\cat{Set}}
\newcommand{\dist}{\ensuremath{\mathcal D}\xspace}
\newcommand{\unit}{\ensuremath{\eta}\xspace}
\newcommand{\mult}{\ensuremath{\mu}\xspace}
\renewcommand\epsilon\varepsilon
\newcommand{\truthstruct}{o}
\newcommand{\GTh}{\mathbb{T}}
\newcommand{\Act}{\mathcal{A}}
\newcommand{\Down}{\Pfin^\downarrow}
\newcommand{\trdiamond}[1]{\Diamond_{#1}}
\newcommand{\bbrack}[1]{{\llbracket #1 \rrbracket}}
\newcommand{\Sem}[1]{\bbrack{#1}}
\newcommand{\into}{\hookrightarrow}
\newcommand{\contrapow}{\mathcal{Q}}
\newcommand{\Op}{\mathit{op}}
\newcommand{\FA}{\mathfrak{A}}
\newcommand{\BBM}{M}
\newcommand{\by}[1]{(\text{#1})}
\newcommand{\PropOps}{\mathcal{O}}
\newcommand{\BC}{\mathbf{C}}
\newcommand{\Pow}{\mathcal{P}}
\newcommand{\Pfin}{\Pow_\omega}
\newcommand{\Dist}{\mathcal{D}}
\newcommand{\CT}{\ensuremath{\mathcal {CT}}\xspace}
\newcommand{\RT}{\ensuremath{\mathcal {RT}}\xspace}
\newcommand{\FT}{\ensuremath{\mathcal{FT}}\xspace}%
\newcommand{\R}{\ensuremath{\mathcal R}}
\newcommand{\F}{\ensuremath{\mathcal F}}
\title{Graded Monads and Graded Logics\protect\\for the Linear Time --
  Branching Time Spectrum}
\titlerunning{Graded Monads and Graded Logics for the Linear Time -- Branching Time Spectrum}
\author{Ulrich Dorsch, Stefan Milius and Lutz Schröder}%
  {Friedrich-Alexander-Universität Erlangen-Nürnberg, Germany}{}{}%
  {}
\authorrunning{U. Dorsch and S. Milius and L. Schröder}
\keywords{Linear Time, Branching Time, Monads, System Equivalences,
  Modal Logics, Expressiveness}
\begin{document}

\maketitle

\begin{abstract}
  State-based models of concurrent systems are traditionally
  considered under a variety of notions of process equivalence. In the
  case of labelled transition systems, these equivalences range from
  trace equivalence to (strong) bisimilarity, and are organized in
  what is known as the linear time -- branching time spectrum. A
  combination of universal coalgebra and graded monads provides a
  generic framework in which the semantics of concurrency can be
  parametrized both over the branching type of the underlying
  transition systems and over the granularity of process
  equivalence. We show in the present paper that this framework of
  \emph{graded semantics} does subsume the most important equivalences
  from the linear time -- branching time spectrum. An important
  feature of graded semantics is that it allows for the principled
  extraction of characteristic modal logics. We have established
  invariance of these \emph{graded logics} under the given graded
  semantics in earlier work; in the present paper, we extend the
  logical framework with an explicit propositional layer and provide a
  generic expressiveness criterion that generalizes the classical
  Hennessy-Milner theorem to coarser notions of process
  equivalence. We extract graded logics for a range of graded
  semantics on labelled transition systems and probabilistic systems,
  and give exemplary proofs of their expressiveness based on our
  generic criterion.
\end{abstract}

\section{Introduction}

State-based models of concurrent systems are standardly considered
under a wide range of system equivalences, typically located between
two extremes respectively representing \emph{linear time} and
\emph{branching time} views of system evolution. Over labelled
transition systems, one specifically has the well-known \emph{linear
  time -- branching time spectrum} of system equivalences between
trace equivalence and
bisimilarity~\cite{vanglabbeek2001linear}. Similarly, e.g.\
probabilistic automata have been equipped with various semantics
including strong bisimilarity~\cite{LarsenSkou91}, probabilistic
(convex) bisimilarity~\cite{SegalaLynch94}, and distribution
bisimilarity (e.g.~\cite{DengEA08,DoyenEA08}). New equivalences keep
appearing in the literature, e.g.~for non-deterministic probabilistic
systems~\cite{BonchiEA19,vanHeerdtEA18}.

This motivates the search for unifying principles that allow for a
generic treatment of process equivalences of varying degrees of
granularity and for systems of different branching types
(non-deterministic, probabilistic etc.). As regards the variation of
the branching type, universal coalgebra~\cite{Rutten00} has emerged as
a widely-used uniform framework for state-based systems covering a
broad range of branching types including besides non-deterministic and
probabilistic, or more generally weighted, branching also, e.g.,
alternating, neighbourhood-based, or game-based systems. It is based
on modelling the system type as an endofunctor on some base category,
often the category of sets.

Unified treatments of system equivalences, on the other hand, are so
far less well-established, and their applicability is often more
restricted. Existing approaches include coalgebraic trace semantics in
Kleisli~\cite{HasuoEA07} and Eilenberg-Moore
categories~\cite{KissigKurz10,JacobsEA12,sbbr13,bms13,BonchiEA19,vanHeerdtEA18},
respectively.  Both semantics are based on decomposing the coalgebraic
type functor into a monad, the \emph{branching type}, and a functor,
the \emph{transition type} (in different orders), and require suitable
distributive laws between these parts; correspondingly, they grow
naturally out of the functor but on the other hand apply only to
functors that admit the respective form of decomposition. In the
present work, we build on a more general approach introduced by
Pattinson and two of us, based on mapping the coalgebraic type functor
into a \emph{graded monad}~\cite{MiliusEA15}. Graded monads correspond
to algebraic theories where operations come with an explicit notion of
\emph{depth}, and allow for a stepwise evaluation of process
semantics. Maybe most notably, graded monads systematically support a
reasonable notion of \emph{graded logic} where modalities are
interpreted as \emph{graded algebras} for the given graded monad. This
approach applies to all cases covered in the mentioned previous
frameworks, and additional cases that do not fit any of the earlier
setups. We emphasize that graded monads are geared towards
\emph{inductively} defined equivalences such as finite trace semantics
and finite-depth bisimilarity; we leave a similarly general treatment
of infinite-depth equivalences such as infinite trace equivalence and
unbounded-depth bisimilarity to future work. To avoid excessive
verbosity, we restrict to models with finite branching
throughout. Under finite branching, finite-depth equivalences
typically coincide with their infinite-depth counterparts, e.g.\
states of finitely branching labelled transition systems are bisimilar
iff they are finite-depth bisimilar, and infinite-trace equivalent iff
they are finite-trace equivalent.

Our goal in the present work is to illustrate the level of generality
achievable by means of graded monads in the dimension of system
equivalences. We thus pick a fixed coalgebraic type, that of labelled
transition systems, and elaborate how a number of equivalences from
the linear time -- branching time spectrum are cast as graded
monads. In the process, we demonstrate how to extract logical
characterizations of the respective equivalences from most of the
given graded monads. For the time being, none of the logics we find
are sensationally new, and in fact van Glabbeek already provides
logical characterizations in his exposition of the linear time --
branching time spectrum~\cite{vanglabbeek2001linear}; an overview of
characteristic logics for non-deterministic and probabilistic
equivalences is given by Bernardo and
Botta~\cite{bernardo-botta:characterising-logics}. The emphasis in the
examples is mainly on showing how the respective logics are developed
uniformly from general principles.

Using these examples as a backdrop, we develop the theory of graded
monads and graded logics further. In particular,
\begin{itemize}
\item we give a more economical characterization of depth-$1$ graded
  monads involving only two functors (rather than an infinite sequence
  of functors);
\item we extend the logical framework by a treatment of propositional
  operators -- previously regarded as integrated into the modalities
  -- as first class citizens; 
\item we prove, as our main technical result, a generic expressiveness
  criterion for graded logics guaranteeing that inequivalent states are
  separated by a trace formula. 
\end{itemize}
Our expressiveness criterion subsumes, at the branching-time end of
the spectrum, the classical Hennessy-Milner
theorem~\cite{HennessyMilner85} and its coalgebraic
generalization~\cite{Pattinson04,Schroder08} as well as expressiveness
of probabilistic modal logic with only
conjunction~\cite{DesharnaisEA98}; we show that it also covers
expressiveness of the respective graded logics for more coarse-grained
equivalences along the linear time -- branching time spectrum. To
illustrate generality also in the branching type, we moreover provide
an example in a probabilistic setting, specifically we apply our
expressiveness criterion to show expressiveness of a quantitative
modal logic for probabilistic trace equivalence.

\myparagraph{Related Work} Fahrenberg and
Legay~\cite{FahrenbergLegay17} characterize equivalences on the linear
time -- branching time spectrum by suitable classes of modal
transition systems. We have already mentioned previous work on
coalgebraic trace semantics in Kleisli and Eilenberg-Moore
categories~\cite{HasuoEA07,KissigKurz10,JacobsEA12,sbbr13,bms13,BonchiEA19,vanHeerdtEA18}. A
common feature of these approaches is that, more precisely speaking,
they model \emph{language} semantics rather than trace semantics --
i.e.\ they work in settings with explicit successful termination, and
consider only successfully terminating traces. When we say that graded
monads apply to all scenarios covered by these approaches, we mean
more specifically that the respective language semantics are obtained
by a further canonical quotienting of our trace
semantics~\cite{MiliusEA15}. Having said that graded monads are
strictly more general than Kleisli and Eilenberg-Moore style trace
semantics, we hasten to add that the more specific setups have their
own specific benefits including final coalgebra characterizations and,
in the Eilenberg-Moore setting, generic determinization procedures. A
further important piece of related work is Klin and Rot's method of
defining trace semantics via the choice of a particular flavour of
trace logic~\cite{KlinRot15}. In a sense, this approach is opposite to
ours: A trace logic is posited, and then two states are declared
equivalent if they satisfy the same trace formulae. In our approach
via graded monads, we instead pursue the ambition of first fixing a
semantic notion of equivalence, and then designing a logic that
characterizes this equivalence. Like Klin and Rot, we view trace
equivalence as an inductive notion, and in particular limit attention
to finite traces; coalgebraic approaches to infinite traces exist, and
mostly work within the Kleisli-style
setup~\cite{Jacobs04,Cirstea11,KerstanKonig13,Cirstea14,Cirstea15,UrabeHasuo15,Cirstea17}. Jacobs,
Levy and Rot~\cite{JacobsEA18} use corecursive algebras to provide a
unifying categorical view on the above-mentioned approaches to traces
via Kleisli- and Eilenberg-Moore categories and trace logics,
respectively. This framework does not appear to subsume the approach
via graded monads, and like for the previous approaches we are not
aware that it covers semantics from the linear time -- branching time
spectrum other than the end points (bisimilarity and trace
equivalence).

\section{Preliminaries: Coalgebra}\label{sec:prelim}

We recall basic definitions and results in \emph{(universal)
  coalgebra}~\cite{Rutten00}, a framework for the unified treatment of
a wide range of reactive systems. We write~$1=\{\star\}$ for a fixed
one-element set, and $!\colon X\to 1$ for the unique map from a set~$X$
into~$1$. We write $f\cdot g$ for the composite of maps $g\colon X\to Y$,
$f\colon Y\to Z$, and $\langle f,g\rangle\colon X\to Y\times Z$ for the pair map
$x\mapsto(f(x),g(x))$ formed from maps $f\colon X\to Y$, $g\colon X\to Z$.

Coalgebra encapsulates the branching type of a given species of
systems as a \emph{functor}, for purposes of the present paper on the
category of sets. Such a functor $G\colon \Set\to\Set$ assigns to each
set~$X$ a set~$GX$, whose elements we think of as structured
collections over~$X$, and to each map $f\colon X\to Y$ a map $Gf\colon
GX\to GY$,
preserving identities and composition. E.g.\ the \emph{(covariant)
  powerset functor}~$\Pow$ assigns to each set~$X$ the powerset
$\Pow X$ of~$X$, and to each map $f\colon X\to Y$ the map
$\Pow f\colon \Pow X\to\Pow Y$ that takes direct images. (We mostly omit
the description of the action of functors on maps in the sequel.)
Systems with branching type described by~$G$ are then abstracted as
\emph{$G$-coalgebras}, i.e.\ pairs $(X,\gamma)$ consisting of a
set~$X$ of \emph{states} and a map $\gamma\colon X\to GX$, the
\emph{transition map}, which assigns to each state $x\in X$ a
structured collection $\gamma(x)$ of successors. For instance, a
$\Pow$-coalgebra assigns to each state a set of successors, and thus
is the same as a transition system.
\begin{example}\label{expl:coalg}
  \begin{longitemslist}
  \item Fix a set~$\Act$ of \emph{actions}. The functor
    $\Act\times(-)$ assigns to each set $X$ the set $\Act\times X$;
    composing this functor with the powerset functor, we obtain the
    functor $G=\Pow(\Act\times(-))$ whose coalgebras are precisely
    labelled transition systems (LTS): A $G$-coalgebra assigns to each
    state~$x$ a set of pairs $(\sigma,y)$, indicating that~$y$ is a
    successor of~$x$ under the action~$\sigma$.
  \item The \emph{(finite) distribution functor}~$\Dist$ maps a
    set~$X$ to the set of finitely supported discrete probability
    distributions on~$X$. These can be represented as probability mass
    functions $\mu\colon X\to[0,1]$, with $\sum_{x\in X}\mu(x)=1$ and with
    the \emph{support} $\{x\in X\mid \mu(x)>0\}$ being
    finite. Coalgebras for~$\Dist$ are precisely Markov
    chains. Composing with $\Act\times(-)$ as above, we obtain the
    functor $\Dist(\Act\times(-))$, whose coalgebras are
    \emph{generative probabilistic transition systems}, i.e.~assign
    to each state a distribution over pairs consisting of an action
    and a successor state. 
  \end{longitemslist}
\end{example}
As indicated in the introduction, we restrict attention to
\emph{finitary} functors~$G$, in which every element $t\in GX$ is
represented using only finitely many elements of~$X$; formally, each
set~$GX$ is the union of all sets $Gi_Y[GY]$ where $Y$ ranges over
finite subsets of~$X$ and $i_Y$ denotes the injection $i_Y\colon Y\into X$.
Concretely, this means that we restrict the set~$\Act$ of actions to
be finite, and work with the \emph{finite powerset functor}~$\Pfin$
(which maps a set~$X$ to the set of its finite subsets) in lieu
of~$\Pow$. ($\Dist$ as defined above is already finitary.)

Coalgebra comes with a natural notion of \emph{behavioural
  equivalence} of states. A \emph{morphism}
$f\colon (X,\gamma)\to(Y,\delta)$ of $G$-coalgebras is a map $f\colon X\to Y$ that
commutes with the transition maps, i.e.\
$\delta\cdot f=Gf\cdot\gamma$. Such a morphism is seen as preserving
the behaviour of states (that is, behaviour is defined as being
whatever is preserved under morphisms), and consequently states
$x\in X$, $z\in Z$ in coalgebras $(X,\gamma)$, $(Z,\zeta)$ are
\emph{behaviourally equivalent} if there exist coalgebra morphisms
$f\colon (X,\gamma)\to(Y,\delta)$, $g\colon (Z,\zeta)\to(Y,\delta)$ such that
$f(x)=g(z)$. For instance, states in LTSs are
behaviourally equivalent iff they are bisimilar in the standard sense,
and similarly, behavioural equivalence on generative probabilistic
transition systems coincides with the standard notion of probabilistic
bisimilarity~\cite{Klin09}. We have an alternative notion of
finite-depth behavioural equivalence: Given a $G$-coalgebra
$(X,\gamma)$, we define a series of maps $\gamma_n\colon X\to G^n1$
inductively by taking $\gamma_0$ to be the unique map $X\to 1$, and
$\gamma_{n+1} = G\gamma_n \cdot\gamma$. (These are the first $\omega$
steps of the \emph{canonical cone} from~$X$ into the \emph{final
  sequence} of~$G$~\cite{AdamekKoubek77}.) Then states $x,y$ in
coalgebras $(X,\gamma)$, $(Z,\zeta)$ are \emph{finite-depth
  behaviourally equivalent} if $\gamma_n(x)=\zeta_n(y)$ for all $n$;
in the case where~$G$ is finitary, finite-depth behavioural equivalence
coincides with behavioural equivalence~\cite{worrell}.

\section{Graded Monads and Graded Theories}

\noindent We proceed to recall background on system semantics via
graded monads introduced in our previous work~\cite{MiliusEA15}. We
formulate some of our results over general base categories~$\BC$,
using basic notions from category theory~\cite{MacLane98,Pierce91};
for the understanding of the examples, it will suffice to think of
$\BC=\Set$. Graded monads were originally introduced by
Smirnov~\cite{smirnov08} (with grades in a commutative monoid, which
we instantiate to the natural numbers):
\begin{defn}[Graded Monads]
  A \emph{graded monad}~$\BBM$ on a category $\cat C$ consists of a
  family of functors $(M_n\colon \cat C \to \cat C)_{n<\omega}$, a natural
  transformation $\unit\colon \Id \to M_0$ (the \emph{unit}) and a family
  of natural transformations 
  \iffull
  \[
    \mult^{nk}\colon M_n M_k \to M_{n+k}\quad (n,k<\omega)
  \]
  \else\/$\mult^{nk}\colon M_n M_k \to M_{n+k}$ for $n,k<\omega$, \fi
  (the \emph{multiplication}), satisfying the \emph{unit laws},
  $\mult^{0n}\cdot\unit M_n = \id_{M_n} = \mult^{n0}\cdot M_n\unit$,
  for all $n<\omega$, and the \emph{associative law}
  \iffull
  \[
    \begin{tikzcd}
    M_nM_kM_m\ar{d}{\mult^{nk}M_m}\ar{r}{M_n\mult^{km}} & M_nM_{k+m}\ar{d}{\mult^{n,k+m}} \\
      M_{n+k}M_m\ar{r}{\mult^{n+k,m}} & M_{n+k+m}
    \end{tikzcd}
    \qquad\text{for all $k,n,m<\omega$.}
  \]
  \else\/$\mult^{n,k+m} \cdot M_n \mult^{km} = \mult^{n+k,m} \cdot
  \mult^{nk}M_m$ for all $k,n,m<\omega$.\fi
\end{defn}
Note that it follows that $(M_0, \eta, \mu^{00})$ is a (plain)
monad. For $\cat C = \Set$, the standard equivalent presentation of
monads as algebraic theories carries over to graded monads. Whereas
for a monad $T$, the set $TX$ consists of terms over $X$ modulo
equations of the corresponding algebraic theory, for a graded monad
$(M_n)_{n<\omega}$, $M_nX$ consists of terms of uniform depth $n$
modulo equations:g

\begin{defn}[Graded Theories~\cite{MiliusEA15}]
  A \emph{graded theory} $(\Sigma,E,d)$ consists of an algebraic
  theory, i.e.\ a (possibly class-sized and infinitary) algebraic
  signature $\Sigma$ and a class $E$ of equations, and an assignment
  $d$ of a \emph{depth} $d(f)<\omega$ to every operation symbol
  $f\in\Sigma$. This induces a notion of a term \emph{having uniform
    depth $n$}: all variables have uniform depth $0$, and
  $f(t_1,\dots,t_n)$ with $d(f)=k$ has uniform depth $n+k$ if all
  $t_i$ have uniform depth $n$. (In particular, a constant $c$ has
  uniform depth $n$ for all $n\ge d(c)$). We require that all
  equations $t=s$ in $E$ have uniform depth, i.e.\ that both $t$ and
  $s$ have uniform depth~$n$ for some~$n$. Moreover, we require that
  for every set $X$ and every $k<\omega$, the class of terms of
  uniform depth $k$ over variables from $X$ modulo provable equality
  is small (i.e.\ in bijection with a set).
\end{defn}
\noindent Graded theories and graded monads on $\Set$ are essentially
equivalent concepts~\cite{smirnov08,MiliusEA15}. In particular, a
graded theory $(\Sigma,E,d)$ induces a graded monad~$\BBM$ by taking
$M_nX$ to be the set of $\Sigma$-terms over $X$ of uniform depth~$n$,
modulo equality derivable under $E$.

\begin{example}\label{E:graded-monad}
  We recall some examples of graded monads and theories~\cite{MiliusEA15}.
  \begin{longitemslist}
  \item\label{E:graded-monad:bisim} For every endofunctor $F$ on
    $\cat C$, the $n$-fold composition $M_n = F^n$ yields a graded
    monad with unit $\eta = \id_{\Id}$ and $\mu^{nk} = \id_{F^{n+k}}$.

  \item\label{E:graded-monad:kleisli} As indicated in the
    introduction, distributive laws yield graded monads: Suppose that
    we are given a monad $(T,\unit,\mult)$, an endofunctor $F$ on
    $\cat C$ and a distributive law of $F$ over $T$ (a so-called
    \emph{Kleisli law}), i.e.\ a natural transformation
    $\lambda\colon FT \to TF$ such that $\lambda \cdot F\eta = \eta F$
    and $\lambda \cdot F\mu = \mu F \cdot T\lambda \cdot \lambda T$.
    Define natural transformations $\lambda^n\colon F^nT \to TF^n$
    inductively by $\lambda^0 = \id_T$ and
    $\lambda^{n+1} = \lambda^{n}F \cdot F^{n}\lambda$. Then we obtain
    a graded monad with $M_n = TF^n$, unit $\eta$, and multiplication
    $\mu^{nk} = \mu F^{n+1} \cdot T\lambda^n F^k$. The situation is
    similar for distributive laws of~$T$ over~$F$ (so-called
    \emph{Eilenberg-Moore laws}).
  \item\label{E:graded-monad:tr} As a special case of
    \ref{E:graded-monad:kleisli}., for every monad $(T, \eta, \mu)$ on
    $\Set$ and every set $\Act$, we obtain a graded monad with
    $M_nX = T(\Act^n \times X)$.  Of particular interest to us will be
    the case where $T = \Pfin$, which is generated by the algebraic
    theory of join semilattices (with bottom). The arising graded monad
    $M_n=\Pfin(\Act^n\times X)$, which is
    associated with trace equivalence, is generated by the graded
    theory consisting, at depth~$0$, of the operations and equations
    of join semilattices, and additionally a unary operation of
    depth~$1$ for each $\sigma \in \Act$, subject to (depth-$1$)
    equations expressing that these unary operations distribute over
    joins.
  \end{longitemslist}
\end{example}

\myparagraph{Depth-1 Graded Monads and Theories}
where operations and equations have depth at most~$1$ are a particularly convenient case for
purposes of building algebras of graded monads; in the following, we elaborate on this
condition.
\begin{defn}[Depth-1 Graded
  Theory~\cite{MiliusEA15}]\label{D:d1}
  A graded theory is called \emph{depth-$1$} if all its operations
  and equations have depth at most~$1$. A graded monad on $\Set$ is
  \emph{depth-1} if it can be generated by a depth-1 graded theory.
\end{defn}
\begin{proposition}[Depth-1 Graded Monads~\cite{MiliusEA15}]\label{P:d1}
  A graded monad $((M_n),\eta,(\mu^{nk}))$ on $\Set$ is depth-$1$
  iff the diagram below is objectwise a coequalizer diagram in
  $\Set^{M_0}$ for all $n<\omega$:
  \begin{equation}\label{eq:mu1n}
    \xymatrix@1{M_1M_0M_n \ar@<3pt>[rr]^{M_1\mu^{0n}} 
      \ar@<-3pt>[rr]_{\mu^{10}M_n} && M_1 M_n \ar[r]^{\mu^{1n}} & M_{1+n}}.
  \end{equation}
\end{proposition}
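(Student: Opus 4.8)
The plan is to prove the two directions separately, both times by passing through the term model of a generating depth-$1$ graded theory. \emph{From depth-$1$ to the coequalizer:} fix a depth-$1$ graded theory $(\Sigma,E,d)$ generating $\BBM$, with $\Sigma_0,\Sigma_1\subseteq\Sigma$ the operations of depth $0$ and~$1$. I would first set up the bookkeeping that makes~\eqref{eq:mu1n} a diagram in $\Set^{M_0}$: $M_nX$ is the set of $\Sigma$-terms of uniform depth~$n$ over~$X$ modulo~$E$; each $(M_nX,\mu^{0n}_X)$ is an $M_0$-algebra, its algebra laws being instances of the graded unit and associativity laws; and, again by associativity and naturality, the three objects in~\eqref{eq:mu1n} carry $M_0$-algebra structures (suitable components of~$\mu^{0,-}$) for which all three maps are $M_0$-homomorphisms. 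That $\mu^{1n}$ coequalizes $M_1\mu^{0n}$ and $\mu^{10}M_n$ is the instance $(n,k,m)=(1,0,n)$ of the associative law, so $\mu^{1n}$ factors as $\mu^{1n}=\bar\mu\cdot q$ through the $\Set^{M_0}$-coequalizer $q\colon M_1M_nX\twoheadrightarrow Q$. Surjectivity of~$\bar\mu$ is clear: a term of uniform depth~$1+n$, cut at depth~$n$ from the leaves, is exhibited as a uniform-depth-$1$ term in variables of uniform depth~$n$, i.e.\ as a value of~$\mu^{1n}_X$.

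The remaining point — and the step I expect to be the main obstacle — is injectivity of~$\bar\mu$: two \enquote{two-layer} terms $s,t\in M_1M_nX$ whose flattenings are provably equal at uniform depth~$1+n$ must already be identified by~$q$. For this I would take a derivation of $\mu^{1n}_X(s)=\mu^{1n}_X(t)$ and stratify it by the depth ($0$ or~$1$) at which each equation of~$E$ is applied: rewrites by depth-$0$ equations act only within the outer $M_0$-layer and are invisible in~$Q$, since $Q$ is an $M_0$-algebra and $q$ an $M_0$-homomorphism; a rewrite by a depth-$1$ equation inside a uniform-depth-$(1+n)$ term is necessarily instantiated by uniform-depth-$n$ terms at its variable positions — possibly only after some depth-$0$ material is shifted across the layer boundary — and is then matched on both layers at once, the shift being exactly the relation $M_1\mu^{0n}\sim\mu^{10}M_n$ that is quotiented out and the freedom in where to place the depth-$n$ cut being exactly what the coequalizer identifies. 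Reading the stratified derivation inside~$Q$ then yields $q(s)=q(t)$.

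\emph{From the coequalizer to depth-$1$:} assume~\eqref{eq:mu1n} is a coequalizer in $\Set^{M_0}$ for all~$n$. Fix a presentation $(\Sigma_0,E_0)$ of the monad~$M_0$ (finitary, using finitariness of the functors) and define a graded theory~$\mathcal T$ that retains $\Sigma_0$ and~$E_0$ at depth~$0$, adds for each $m<\omega$ a depth-$1$ operation symbol of arity~$m$ for every element of~$M_1\{1,\dots,m\}$, and imposes every uniform-depth-$\le 1$ equation valid in~$\BBM$. By construction $\mathcal T$ is depth-$1$, and a routine verification gives $\BBM^{\mathcal T}_0=M_0$, $\BBM^{\mathcal T}_1=M_1$, and agreement of $\eta$ and of $\mu^{00},\mu^{01},\mu^{10},\mu^{11}$ with those of~$\BBM$. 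Now both $\BBM^{\mathcal T}$ (by the implication just proved, $\mathcal T$ being depth-$1$) and~$\BBM$ (by assumption) satisfy~\eqref{eq:mu1n} for all~$n$; as coequalizers are unique up to isomorphism, an induction on~$n$ produces isomorphisms $\BBM^{\mathcal T}_n\cong M_n$, natural in~$X$ and compatible with the $\mu^{1n}$. The residual subtlety is to check, by chasing these coequalizer factorizations through the graded associativity squares, that the isomorphisms assemble into an isomorphism of graded monads; once this is done, $\BBM\cong\BBM^{\mathcal T}$ is depth-$1$.
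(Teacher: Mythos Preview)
The paper does not prove this proposition; it is quoted from~\cite{MiliusEA15}, so there is no in-paper proof to compare your attempt against.

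That said, your backward direction is essentially the content of \autoref{thm:depth-1-graded-monads-M0}, whose appendix proof reconstructs the full graded monad from the data $(M_0,M_1,\eta,\mu^{00},\mu^{01},\mu^{10})$ by iterating the coequalizer~\eqref{eq:mu1n} and then checks all the associativity squares --- precisely your \enquote{residual subtlety}. Your packaging via a presenting theory~$\mathcal T$ is a mild variant of the same construction. One small slip: you invoke finitariness of the functors to obtain a finitary presentation of~$M_0$, but the proposition carries no finitariness hypothesis; the fix is just to allow operations of arbitrary arity, which the paper's notion of graded theory explicitly permits.

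Your forward direction has the right shape, but the stratification step is where the sketch is thinnest. The claim that a derivation of $\mu^{1n}_X(s)=\mu^{1n}_X(t)$ decomposes into depth-$0$ rewrites (absorbed by the $M_0$-algebra structure of~$Q$) and depth-$1$ rewrites instantiated at uniform-depth-$n$ positions (matched by the coequalizer relation) is correct, but the phrase \enquote{possibly only after some depth-$0$ material is shifted across the layer boundary} is exactly where the combinatorics lives and is asserted rather than argued. What makes it go through is that in a depth-$1$ equation the variables sit at uniform depth~$0$, so in any instance occurring inside a uniform-depth-$(1{+}n)$ term they are necessarily substituted by uniform-depth-$n$ subterms; you should turn this into an explicit induction over derivation steps rather than leave it as a one-line remark.
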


\begin{example}\label{E:d1}
  All graded monads in \autoref{E:graded-monad} are depth $1$:
  for~\ref{E:graded-monad:bisim}., this is easy to see,
  for~\ref{E:graded-monad:tr}., it follows from the presentation as a
  graded theory, and for~\ref{E:graded-monad:kleisli}.,
  \iffull\/see~\hyperref[S:d1]{Appendix~\ref{S:d1}}.\else\/see~\cite{DorschEA19}.\fi
\end{example}

\noindent One may use the equivalent property of \autoref{P:d1} to
define depth-1 graded monads over arbitrary base
categories~\cite{MiliusEA15}. We show next that depth-1 graded monads
may be specified by giving only $M_0$, $M_1$, the unit~$\eta$, and $\mu^{nk}$
for $n+k \leq 1$.

\begin{theorem}
\label{thm:depth-1-graded-monads-M0}
Depth-$1$ graded monads are in bijective correspondence with
$6$-tuples $(M_0,M_1,\unit,\mult^{00},\mult^{10},\mult^{01})$ such
that the given data satisfy all applicable instances of the graded monad
laws.
\end{theorem}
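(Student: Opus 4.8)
The plan is to exhibit the bijection by constructing, from a 6-tuple $(M_0,M_1,\unit,\mult^{00},\mult^{10},\mult^{01})$ satisfying the applicable monad laws, a full depth-$1$ graded monad, and checking this is inverse to the obvious forgetful map that discards all $M_n$ for $n\ge 2$ and all $\mult^{nk}$ with $n+k\ge 2$. The key idea is that \autoref{P:d1} tells us that in a depth-$1$ graded monad, each $M_{1+n}$ is forced: it is the object-wise coequalizer in $\Set^{M_0}$ of the pair $M_1\mult^{0n},\mult^{10}M_n\colon M_1M_0M_n\rightrightarrows M_1M_n$. So given only the 6-tuple, I would first \emph{define} $M_{n+2}X$ by induction on $n$ as exactly this coequalizer, noting that the two parallel arrows are expressible purely from the given data once $M_n$, $\mult^{0n}$ and $\mult^{10}$ are known (and $\mult^{0n}$ itself is defined below, so the induction must interleave the definitions of $M_n$ and of the low-degree-looking multiplications). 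The $M_0$-algebra structure on the coequalizer object comes for free since coequalizers in $\Set^{M_0}$ are computed as in $\Set$ with the induced algebra structure, and functoriality of $M_{n+2}$ follows from the universal property applied to a naturality square.

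Next I would reconstruct the remaining multiplication components. The unit laws already pin down $\mult^{0n}$ and $\mult^{n0}$ in terms of $\eta$ once we agree $M_{n}$ is defined: more usefully, $\mult^{0n}=M_0$ applied appropriately — precisely, $\mult^{0,1+n}$ is induced on the coequalizer, and $\mult^{1n}$ is by definition the coequalizer map itself. For general $\mult^{nk}$ I would proceed by induction on $n$: for $n=0$ and $n=1$ the component is given or is the coequalizer projection; for $n\ge 2$, write $M_n=M_1M_{n-1}$ "up to the coequalizer" — more precisely, use that $\mult^{1,n-1}$ is a (regular, hence strong) epimorphism in $\Set^{M_0}$, so a map out of $M_nM_k=M_{1+(n-1)}M_k$ is uniquely determined by a compatible map out of $M_1M_{n-1}M_k$, and define $\mult^{nk}$ to be the unique factorization of $\mult^{1,(n-1)+k}\cdot M_1\mult^{n-1,k}$ through $\mult^{1,n-1}M_k$. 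One must check this composite coequalizes the relevant pair, which is where the associativity instances among the given low-degree multiplications get used.

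The bulk of the work — and the main obstacle — is verifying that the reconstructed data satisfy \emph{all} graded monad laws (associativity $\mult^{n,k+m}\cdot M_n\mult^{km}=\mult^{n+k,m}\cdot\mult^{nk}M_m$ and the two unit laws for all indices), not just the low-degree ones assumed. The strategy here is a triple induction on $(n,k,m)$: each associativity instance is a statement about maps out of an iterated product $M_nM_kM_mX$, and since $\mult^{1,-}$-type maps are regular epis in $\Set^{M_0}$ and the functors $M_n$ preserve epis (functors on $\Set$ preserve surjections, hence so do $\Set^{M_0}$-level computations), one can reduce any instance with some index $\ge 2$ to instances with strictly smaller indices by precomposing with a suitable coequalizer projection and invoking its cancellability; the base cases are exactly the "applicable instances" we assumed. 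Finally, I would check that starting from a genuine depth-$1$ graded monad, truncating to the 6-tuple and then re-running the above construction recovers the original: this is immediate from \autoref{P:d1}, which says the original $M_{1+n}$ \emph{was} the coequalizer in question and the original $\mult^{1+n,k}$ \emph{was} the factorization, so the uniqueness clauses in our construction force agreement. The one point demanding care throughout is keeping the interleaved inductions well-founded — defining $M_{n+2}$ needs $\mult^{0n}$ and $\mult^{10}$, and $\mult^{0,n+2}$ needs $M_{n+2}$ — so I would set up a single induction on $n$ that simultaneously produces $M_n$, all $\mult^{0n}$, $\mult^{n0}$, and then a secondary induction producing the general $\mult^{nk}$, before turning to the law-checking.
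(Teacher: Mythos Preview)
Your proposal is correct and follows essentially the same approach as the paper: build the higher $M_{n}$ inductively as the coequalizers in~\eqref{eq:mu1n}, take $\mu^{1n}$ to be the coequalizer map, define the remaining $\mu^{n+1,k}$ by factoring $\mu^{1,n+k}\cdot M_1\mu^{nk}$ through the epimorphism $\mu^{1n}M_k$, verify the graded monad laws by induction using epimorphy of the coequalizer maps, and invoke \autoref{P:d1} for the inverse direction. The only cosmetic difference is that the paper organizes the associativity verification as three separate inductive arguments (one for $\mu^{0k}$ with $k\ge 2$, one for the square with left index split, one for the square with right index split) rather than a single triple induction, but the substance is the same.
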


\myparagraph{Semantics via Graded Monads} We next recall how graded
monads define \emph{graded semantics}:
\begin{defn}[Graded
  semantics~\cite{MiliusEA15}]\label{def:alpha-trace-semantics}
  Given a set functor~$G$, a \emph{graded semantics} for
  $G$-coalgebras consists of a graded monad
  $((M_n),\unit,(\mult^{nk}))$ and a natural transformation
  $\alpha\colon G\to M_1$.  The $\alpha$-\emph{pretrace sequence}
  $( \gamma^{(n)}\colon X\to M_nX )_{n<\omega}$ for a
  $G$-coalgebra $\gamma\colon X\to GX$ is defined by
  \[
    \gamma^{(0)} = (X \xrightarrow{\unit_X} M_0 X)
    \quad\text{and}\quad
    \gamma^{(n+1)} = (X
    \xrightarrow{\alpha_X\cdot\gamma} M_1 X \xrightarrow{M_1\gamma^{(n)}} M_1 M_n X
    \xrightarrow{\mult_X^{1n}} M_{n+1}X).
  \]
  The $\alpha$-\emph{trace sequence} $T^\alpha_\gamma$ is the sequence
  $( M_n!\cdot\gamma^{(n)}\colon X\to M_n1)_{n<\omega}$.
  
  In \Set, two states $x\in X$, $y\in Y$ of coalgebras
  $\gamma\colon X\to GX$ and $\delta\colon Y\to GY$ are $\alpha$-\emph{trace} (or
  \emph{graded}) \emph{equivalent} if
  $M_n!\cdot\gamma^{(n)}(x) = M_n!\cdot\delta^{(n)}(y)$ for all
  $n<\omega$.
\end{defn}
Intuitively, $M_nX$ consists of all length-$n$ \emph{pretraces}, i.e.\
traces paired with a poststate, and $M_n1$ consists of all length-$n$
traces, obtained by erasing the poststate. Thus, a graded semantics
extracts length-$1$ pretraces from successor structures. In the
following two examples we have $M_1 = G$; however, in general $M_1$
and $G$ can differ (\autoref{sec:ltbt}).

\begin{example}\label{expl:monads}
  Recall from \autoref{sec:prelim} that a $G$-coalgebra for the
  functor $G = \Pfin(\Act \times -)$ is just a finitely branching LTS.
  We recall two graded semantics that model the extreme ends of the
  linear time -- branching time spectrum~\cite{MiliusEA15}; more
  examples will be given in the next section

  \begin{longitemslist}
  \item \emph{Trace equivalence.} For $x,y \in X$ and $w\in \Act^*$,
    we write $x \xrightarrow{w} y$ if $y$ can be reached from $x$ on a
    path whose labels yield the word $w$, and
    $\T(x) = \{w \in \Act^* \mid \exists y \in X .\ x \xrightarrow{w}
    y\}$
    denotes the set of \emph{traces} of $x \in X$. States $x,y$ are
    \emph{trace equivalent} if $\T(x)=\T(y)$. To capture trace
    semantics of labelled transition systems we consider the graded
    monad with $M_nX = \Pow(\Act^n \times X)$ 
    (see \autoref{E:graded-monad}.\ref{E:graded-monad:tr}). The
    natural transformation $\alpha$ is the identity. For a
    $G$-coalgebra $(X, \gamma)$ and $x \in X$ we have that
    $\gamma^{(n)}(x)$ is the set of pairs $( w, y)$ with
    $w \in \Act^n$ and $x \xrightarrow{w} y$, i.e.\ pairs of
    length-$n$ traces and their corresponding poststate. Consequently,
    the~$n$-th component $M_n!  \cdot \gamma^{(n)}$ of the
    $\alpha$-trace sequence maps $x$ to the set of its length-$n$
    traces. Thus, $\alpha$-trace equivalence is standard trace
    equivalence~\cite{vanglabbeek2001linear}.

      Note that the equations presenting the graded monad $M_n$ in
      \autoref{E:graded-monad}.\ref{E:graded-monad:tr} bear a striking
      resemblance to the ones given by van Glabbeek to axiomatize
      trace equivalence of processes, with the difference that in his
      axiomatization actions do not distribute over the empty join. In
      fact, $a.0 = 0$ is clearly not valid for processes under trace
      equivalence. In the graded setting, this equation just expresses
      the fact that a trace which ends in a deadlock after $n$ steps
      cannot be extended to a trace of length $n+1$.

    \item\label{expl:monads:2} \emph{Bisimilarity.} By the discussion
      of the final sequence of a functor~$G$ (\autoref{sec:prelim}),
      the graded monad with $M_nX = G^nX$
      (\autoref{E:graded-monad}.\ref{E:graded-monad:bisim}), with
      $\alpha$ being the identity again, captures finite-depth
      behavioural equivalence, and hence behavioural equivalence
      when~$G$ is finitary. In particular, on finitely branching LTS,
      $\alpha$-trace equivalence is bisimilarity in this case.
  \end{longitemslist}
\end{example}

\section{A Spectrum of Graded Monads}\label{sec:ltbt}

We present graded monads for a range of equivalences on the linear
time -- branching time spectrum as well as probabilistic trace
equivalence for generative probabilistic systems (GPS), giving in each
case a graded theory and a description of the arising graded
monads. Some of our equations bear some similarity to van Glabbeek's
axioms for equality of process terms. There are also important
differences, however.  In particular, some of van Glabbeek's axioms
are implications, while ours are purely equational; moreover, van
Glabbeek's axioms sometimes nest actions, while we employ only
depth-$1$ equations (which precludes nesting of actions) in order to
enable the extraction of characteristic logics later. All graded
theories we introduce contain the theory of join semilattices, or in
the case of GPS convex algebras, whose operations are assigned
depth~$0$; we mention only the additional operations needed. We use
terminology introduced in \autoref{expl:monads}.

\myparagraph{Completed Trace
  Semantics}\label{ssec:completed-trace-semntics} refines trace
semantics by distinguishing whether traces can end in a deadlock.  We
define a depth-$1$ graded theory by extending the graded theory for
trace semantics (\autoref{E:graded-monad}) with a constant depth-$1$
operation~$\star$ denoting deadlock. The induced graded monad has
$M_0 X= \Pfin(X)$, $M_1 = \Pfin(\Act \times X + 1)$ (and
$M_nX=\Pfin(\Act^n\times X+\Act^{<n})$ where $\Act^{<n}$ denotes the
set of words over~$\Act$ of length less than~$n$).  The natural
transformation $\alpha_X\colon \Pfin(\Act\times X) \to M_1 X$ is given
by $\alpha(\emptyset)=\{\star\}$ and
$\alpha(S)=S\subseteq \Act \times X + 1$ for
$\emptyset\neq S\subseteq \Act\times X$.

\myparagraph{Readiness and Failures
  Semantics}\label{ssec:readiness_failures} refine completed trace
semantics by distinguishing which actions are available (readiness) or
unavailable (failures) after executing a trace.  Formally, given an
LTS, seen as a coalgebra $\gamma\colon X\to\Pfin(\Act\times X)$, we write
$I(x) = \Pfin\pi_1 \cdot \gamma(x) = \pi_1[ \gamma(x) ]$ ($\pi_1$ being
the first projection) for the set of actions available at~$x$, the
\emph{ready set} of~$x$.  A \emph{ready pair} of a state $x$ is a pair
$(w,A)\in\Act^*\times\Pfin(\Act)$ such that there exists~$z$ with
$x\overset w\to z$ and $A = I(z)$; a \emph{failure pair} is defined in
the same way except that $A\cap I(z)=\emptyset$.  Two states are
\emph{readiness (failures) equivalent} if they have the same ready
(failure) pairs.

We define a depth-$1$ graded theory by extending the graded theory for
trace semantics (\autoref{E:graded-monad}) with constant depth-$1$
operations $A$ for ready (failure) sets $A\subseteq\Act$. In case of
failures we add a monotonicity condition $A + A\cup B = A\cup B$ on
the constant operations for the failure sets. The resulting graded
monads both have $M_0X = \Pfin X$, and moreover
\( M_1X = \Pfin(\Act\times X+\Pfin\Act) \)
for readiness and \( M_1X = \Down(\Act\times X+\Pfin\Act) \)
for failures, where $\Down$ is down-closed finite powerset, w.r.t.\
the discrete order on $\Act\times X$ and set inclusion on $\Pfin\Act$.
The natural transformation
$\alpha_X\colon \Pfin(\Act\times X)\to M_1X$ is defined by
\( \alpha_X(S) = S \cup \{ \pi_1[S] \} \)
for readiness and
\( \alpha_X(S) = S \cup \{ A \subseteq\Act\mid A\cap\pi_1[S]=\emptyset
\} \) for failures semantics.

\myparagraph{Ready Trace and Failure Trace Semantics}\label{ssec:ready-failure-trace}
refine readiness and failures semantics, respectively, by distinguishing
which actions are available (ready trace) or unavailable (failure trace) at each step
of the trace. Formally, a \emph{ready trace} of a state~$x$ is a sequence
$A_0a_1A_1\ldots a_n A_n\in(\Pfin\Act\times\Act)^\ast\times\Pfin\Act$
such that there exist transitions
$x=x_0\overset{a_1}\to x_1\ldots \overset{a_n}\to x_n$ where each
$x_i$ has ready set $I(x_i)=A_i$. A \emph{failure trace} has the same
shape but we require that each $A_i$ is a \emph{failure set} of $x_i$,
i.e.\ $I(x_i)\cap A_i=\emptyset$. States are \emph{ready (failure) trace equivalent} if they have the
same ready (failure) traces.

For ready traces, we define a depth-$1$ graded theory with depth-$1$
operations $\langle A,\sigma \rangle$ for $\sigma\in\Act$,
$A\subseteq\Act$ and a depth-$1$ constant $\star$, denoting deadlock,
and equations
$ \langle A,\sigma \rangle(\sum_{j\in J} x_j) = \sum_{j\in J}\langle
A,\sigma \rangle (x_j)$.
The resulting graded monad is simply the graded monad capturing
completed trace semantics for labelled transition systems where the
set of actions is changed from $\Act$ to $\Pfin\Act\times\Act$. For
failure traces, we additionally impose the equation
$ \langle A,\sigma \rangle(x)+\langle A\cup B,\sigma
\rangle(x)=\langle A\cup B,\sigma \rangle(x)$,
which in the set-based description of the graded monad corresponds to
downward closure of failure sets.

The resulting graded monads both have $M_0X =\Pfin X$; for ready
traces, \( M_1X = \Pfin(( \Pfin\Act\times\Act ) \times X + 1) \)
and for failure traces,
\( M_1X = \Down(( \Pfin\Act\times\Act ) \times X + 1) \),
where $\Down$ is down-closed finite powerset, w.r.t.\ the order
imposed by the above equation.

For ready trace semantics we define the natural transformation
$\alpha_X\colon \Pfin(\Act\times X) \to M_1X$ by
$\alpha_X(\emptyset) = \{\star\}$ and
$\alpha_X(S) = \{ (( \pi_1[S], \sigma),x) \mid ( \sigma ,x
) \in S \})$ for $S \neq\emptyset$.
For failure traces we define
$\alpha_X(\emptyset) = \{\star\}$ and
$\alpha(S) = \{(( A, \sigma),x) \mid
(\sigma,x)\in S, A\cap \pi_1[S]=\emptyset\}$
for $S\neq\emptyset$; note that in the latter case, $\alpha(S)$ is
closed under decreasing failure sets.

\myparagraph{Simulation Equivalence}\label{ssec:simulation}
declares
two states to be equivalent if they simulate each other in the
standard sense.  We define a depth-$1$ graded theory with the same
signature as for trace equivalence but instead of join preservation
require only that each $\sigma$ is monotone, i.e.\
$\sigma(x + y) + \sigma(x) = \sigma(x + y)$. The arising graded
monad~$M_n$ is equivalently described as follows. We define the sets
$M_nX$ inductively, along with an ordering on $M_nX$. We take
$M_0X=\Pfin X$, ordered by set inclusion. We then order the elements
of $\Act\times M_nX$ by the product ordering of the discrete order
on~$\Act$ and the given ordering on~$M_nX$, and take $M_{n+1}X$ to be
the set of downclosed subsets of $\Act\times M_nX$, denoted
$\Down(\Act\times M_nX)$, ordered by set inclusion.  The natural
transformation
$\alpha_X\colon \Pow(\Act\times X) \to \Down(\Act\times\Pfin(X))$ is defined
by $\alpha_X(S)={\downarrow}\{(s,\{x\})\mid (s,x)\in S\}$, where
$\downarrow$ denotes downclosure.

\myparagraph{Ready Simulation Equivalence}
refines simulation equivalence by requiring
additionally that related states have the same ready set.
States $x$ and $y$ are \emph{ready similar} if they are related by
some ready simulation, and ready simulation equivalent if there are
mutually ready similar.
The depth-$1$ graded theory combines the signature for ready traces
with the equations for simulation, i.e.\ only requires the operations
$\langle A,\sigma \rangle$ to be monotone.

\myparagraph{Probabilistic Trace Equivalence}\label{ssec:GPS} is the
standard trace semantics for generative probabilistic systems (GPS),
equivalently, coalgebras for the functor \( \dist(\Act\times\Id) \)
where $\dist$ is the monad of finitary distributions
(\autoref{expl:coalg}).
Probabilistic trace equivalence is captured by the graded monad
$M_nX = \dist (\Act^n\times X)$, as described in
\autoref{E:graded-monad}.\ref{E:graded-monad:kleisli}. The
corresponding graded theory arises by replacing the join-semilattice
structure featuring in the above graded theory for trace equivalence
by the one of \emph{convex algebras}, i.e.\ the algebras for the
monad~$\dist$. Recall~\cite{Doberkat06,Doberkat08} that a convex
algebra is a set $X$ equipped with finite convex sum operations: For
every $p \in [0,1]$ there is a binary operation $\boxplus_p$ on $X$,
and these operations satisfy the equations
\(
x \boxplus_p x = x,
x \boxplus_p y = y \boxplus_{1-p} x,
x \boxplus_0 y = y,
x \boxplus_p (y \boxplus_q z) = (x
\boxplus_{p/r} y) \boxplus_r z,
\)
where $p, q \in [0,1]$, $x, y, z \in X$, and $r = (p + (1-p)q) > 0$
(i.e.\ $p+q>0$) in the last equation~\cite{Jacobs10}.  Again, we have
depth-$1$ operations $\sigma$ for action $\sigma\in\Act$, now
satisfying the equations
\begin{math}
  \sigma(x \boxplus_p y) = \sigma(x) \boxplus_p \sigma(y).
\end{math}




\section{Graded Logics}\label{sec:logics}

\noindent Our next goal is to extract \emph{characteristic logics}
from graded monads in a systematic way, with \emph{characterizing}
meaning that states are logically indistinguishable iff they are
equivalent under the semantics at hand. We will refer to these logics
as \emph{graded logics}; the implication from graded equivalence to
logical indistinguishability is called \emph{invariance}, and the
converse implication \emph{expressiveness}. E.g.\ standard modal logic
with the full set of Boolean connectives is invariant under
bisimilarity, and the corresponding expressiveness result is known as
the \emph{Hennessy-Milner theorem}. This result has been lifted to
coalgebraic generality early on, giving rise to the \emph{coalgebraic
  Hennessy-Milner theorem}~\cite{Pattinson04,Schroder08}. In previous
work~\cite{MiliusEA15}, we have related graded semantics to modal
logics extracted from the graded monad in the envisaged fashion. These
logics are invariant by construction; the main new result we
present here is a generic \emph{expressiveness} criterion, to be
discussed in \autoref{sec:expr}. The key ingredient in this criterion
are \emph{canonical} graded algebras, which we newly introduce here,
providing a recursive-evaluation style reformulation of the semantics
of graded logics.

A further key issue in characteristic modal logics is the choice of
propositional operators; e.g.\ notice that when $\trdiamond{\sigma}$
denotes the usual Hennessy-Milner style diamond operator for an
action~$\sigma$, the formula
$\trdiamond{\sigma}\top\land\trdiamond{\tau}\top$ is invariant under
trace equivalence (i.e.~the corresponding property is closed under
under trace equivalence) but the formula
$\trdiamond{\sigma}(\trdiamond{\sigma}\top\land\trdiamond{\tau}\top)$,
built from the former by simply prefixing with~$\trdiamond{\sigma}$,
is not, the problem being precisely the use of conjunction. While in
our original setup, propositional operators were kept implicit, that
is, incorporated into the set of modalities, we provide an explicit
treatment of propositional operators in the present paper. Besides
adding transparency to the syntax and semantics, having first-class
propositional operators will be a prerequisite for the formulation of
the expressiveness theorem.

\myparagraph{Coalgebraic Modal Logic} To provide context, we briefly
recall the setup of \emph{coalgebraic modal
  logic}~\cite{Pattinson04,Schroder08}. Let~$2$ denote the
set~$\{\bot,\top\}$ of Boolean truth values; we think of the set~$2^X$
of maps $X\to 2$ as the set of predicates on~$X$. Coalgebraic logic in
general abstracts systems as coalgebras for a functor~$G$, like we do
here; fixes a set~$\Lambda$ of \emph{modalities} (unary for the sake
of readability); and then interprets a modality $L\in\Lambda$ by the
choice of a \emph{predicate lifting}, i.e.\ a natural transformation
\begin{equation*}
  \Sem{L}_X\colon 2^X\to 2^{GX}.
\end{equation*}
By the Yoneda lemma, such natural transformations are in bijective
correspondence with maps $G2\to 2$~\cite{Schroder08}, which we shall
also denote as $\Sem{L}$. In the latter formulation, the recursive
clause defining the interpretation $\Sem{L\phi}\colon X\to 2$, for a
modal formula~$\phi$, as a state predicate in a $G$-coalgebra
$\gamma \colon X\to GX$ is then
\begin{equation}\label{eq:coalg-modality}
  \Sem{L\phi}= (X\xrightarrow{\gamma}GX\xrightarrow{ G\Sem{\phi}} G2\xrightarrow{\Sem{L}}2).
\end{equation}
E.g.\ taking $G=\Pfin(\Act\times-)$ (for labelled transition systems),
we obtain the standard semantics of the Hennessy-Milner diamond
modality $\trdiamond{\sigma}$ for~$\sigma\in\Act$ via the predicate
lifting
\begin{equation*}
  \Sem{\trdiamond{\sigma}}_X(f)=\{B\in\Pfin(\Act\times X)\mid
  \exists x.\,(\sigma,x)\in B\land f(x)=\top\}\qquad(\text{for
    $f\colon  X\to 2$}).
\end{equation*}
It is easy to see that \emph{coalgebraic modal logic}, which combines
coalgebraic modalities with the full set of Boolean connectives, is
invariant under finite-depth behavioural equivalence
(\autoref{sec:prelim}). Generalizing the classical Hennessy-Milner
theorem~\cite{HennessyMilner85}, the \emph{coalgebraic Hennessy-Milner
  theorem}~\cite{Pattinson04,Schroder08} shows that conversely,
coalgebraic modal logic \emph{characterizes} behavioural equivalence,
i.e.\ logical indistinguishability implies behavioural equivalence,
provided that~$G$ is finitary (implying coincidence of behavioural
equivalence and finite-depth behavioural equivalence) and~$\Lambda$ is
\emph{separating}, i.e.\ for every finite set~$X$, the set
\begin{equation*}
  \Lambda(2^X)=\{\Sem{L}(f)\mid f\in 2^X\}
\end{equation*}
of maps $GX\to 2$ is jointly injective.

We proceed to introduce the syntax and semantics of graded logics.
\myparagraph{Syntax} We parametrize the syntax of \emph{graded logics}
over
\begin{itemize}
\item a set~$\Theta$ of \emph{truth constants},
\item a set~$\PropOps$ of \emph{propositional operators} with assigned
  finite arities, and
\item a set~$\Lambda$ of \emph{modalities} with assigned arities.
\end{itemize}
For readability, we will restrict the technical exposition to unary
modalities; the treatment of higher arities requires no more than
additional indexing (and we will use $0$-ary modalities in the
examples). E.g.\ standard Hennessy-Milner logic is given by
$\Lambda=\{\trdiamond{\sigma}\mid \sigma\in\Act\}$ and~$\PropOps$
containing all Boolean connectives. Other logics will be determined by
additional or different modalities, and often by fewer propositional
operators. Formulae of the logic are restricted to have uniform depth,
where propositional operators have depth~$0$ and modalities have
depth~$1$; a somewhat particular feature is that truth constants can
have top-level occurrences only in depth-$0$ formulae.  That is,
formulae~$\phi,\phi_1,\dots$ of depth~$0$ are given by the grammar
\begin{equation*}
  \phi\Coloneqq p(\phi_1,\dots,\phi_k) \mid c
    \qquad (p\in\PropOps\text{ $k$-ary}, c\in\Theta),
\end{equation*}
and formulae~$\phi$ of depth $n+1$ by
\begin{equation*}
  \phi\Coloneqq p(\phi_1,\dots,\phi_k) \mid L\psi
    \qquad (p\in\PropOps\text{ $k$-ary}, L\in\Lambda)
\end{equation*}
where $\phi_1,\dots,\phi_n$ range over formulae of depth $n+1$ and
$\psi$ over formulae of depth~$n$. 

\myparagraph{Semantics} The semantics of graded logics is parametrized
over the choice of \emph{a functor~$G$, a depth-$1$ graded monad
  $\BBM=((M_n)_{n<\omega},\eta,$ $(\mu^{nk})_{n,k<\omega})$, and a
  graded semantics~$\alpha\colon G\to M_1$, which we fix for the
  remainder of the paper}. It was originally given by translating
formulae into \emph{graded algebras} and then defining formula
evaluation by the universal property of $(M_n1)$ as a free graded
algebra~\cite{MiliusEA15}; here, we reformulate the semantics in a
more standard style by recursive clauses, using canonical graded
algebras. In general, the notion of graded algebra is defined as
follows~\cite{MiliusEA15}.
\begin{defn}[Graded algebras]
  Let $n<\omega$. A \emph{(graded) $M_n$-algebra}
  $A=((A_k)_{k\le n},(a^{mk})_{m+k\le n})$ consists of carrier
  sets~$A_k$ and structure maps
  \begin{equation*}
    a^{mk}\colon M_mA_k\to A_{m+k}
  \end{equation*}
  satisfying the laws 
  \begin{equation}\label{diag:alg}
    \begin{tikzcd}
      A_k \arrow{r}{\eta_{A_k}} \arrow[equal]{dr}
       & M_0 A_k\arrow{d}{a^{0k}} & 
      M_m M_r A_k \arrow{r}{M_m a^{rk}}
        \arrow{d}[left]{\mu^{mr}_{A_k}} &
        M_m A_{r+k} \arrow{d}{a^{m,r+k}} \\
        & A_k &  M_{m+r}A_k \arrow{r}{a^{m+r,k}} & A_{m+r+k}
    \end{tikzcd}
  \end{equation}
  for all $k\le n$ (left) and all $m,r,k$ such that $m+r+k\le n$
  (right), respectively. An \emph{$M_n$-morphism}~$f$ from~$A$ to an
  $M_n$-algebra $B=((B_k)_{k\le n},(b^{mk})_{m+k\le n})$ consists of
  maps $f_k\colon A_k\to B_k$, $k\le n$, such that
  $f_{m+k}\cdot a^{mk}=b^{mk}\cdot M_mf_k$
  for all $m,k$ such that $m+k\le n$.
\end{defn}
\noindent 
We view the carrier~$A_k$ of an~$M_n$-algebra as the set of algebra
elements that have already absorbed operations up to depth~$k$. As in
the case of plain monads, we can equivalently describe graded algebras
in terms of graded theories: If $\BBM$ is generated by a graded theory
$\GTh=(\Sigma,E,d)$, then an $M_n$-algebra interprets each operation
$f\in\Sigma$ of arity~$r$ and depth~$d(f)=m$ by maps
$f^A_k\colon A_k^r\to A_{m+k}$ for all $k$ such that $m+k\le n$; this
gives rise to an inductively defined interpretation of terms
(specifically, given a valuation of variables in~$A_m$, terms of
uniform depth~$k$ receive values in~$A_{k+m}$, for $k+m\le n$), and
subsequently to the expected notion of satisfaction of equations.


While in general, graded algebras are monolithic objects, for
\mbox{depth-$1$} graded monads we can construct them in a modular
fashion from $M_1$-algebras~\cite{MiliusEA15}; we thus restrict
attention to $M_0$- and $M_1$-algebras in the following. We note that
an $M_0$-algebra is just an Eilenberg-Moore algebra for the
monad~$M_0$. An $M_1$-Algebra~$A$ consists of $M_0$-algebras
$(A_0,a^{00}\colon M_0A_0\to A_0)$ and $(A_1,a^{01}\colon M_0A_1\to A_1)$, and a
\emph{main structure map} $a^{10}\colon M_1A_0\to A_1$ satisfying two
instances of the right-hand diagram in~\eqref{diag:alg}, one of which
says that $a^{10}$ is a morphism of $M_0$-algebras
(\emph{homomorphy}), and the other that the diagram
\begin{equation}
  \label{diag:algebra-coeq}
  \begin{tikzcd}[column sep=large]
    M_1M_0A_0 \arrow[shift left]{r}[above]{\mu^{10}}
    \arrow[shift right]{r}[below]{M_1a^{00}}& M_1A_0 \arrow{r}{a^{10}} & A_1,
  \end{tikzcd}
\end{equation}
which by the laws of graded monads consists of $M_0$-algebra
morphisms, commutes (\emph{coequalization}). We will often refer to an
$M_1$-algebra by just its main structure map.

We will use $M_1$-algebras as interpretations of the modalities in
graded logics, generalizing the previously recalled interpretation of
modalities as maps $G2\to 2$ in branching-time coalgebraic modal
logic. We fix an $M_0$-algebra $\Omega$ of \emph{truth values}, with
structure map $\truthstruct\colon M_0\Omega\to\Omega$ (e.g.\
for~$G=\Pfin$, $\Omega$ is a join semilattice).  Powers~$\Omega^n$
of~$\Omega$ are again
$M_0$-algebras. 
A modality $L\in\Lambda$ is interpreted as an $M_1$-algebra
$A=\Sem{L}$ with carriers $A_0=A_1=\Omega$ and
$a^{01}=a^{00}=\truthstruct$. Such an $M_1$-algebra is thus specified
by its main structure map $a^{10}\colon M_1\Omega\to\Omega$ alone, so
following the convention indicated above we often write $\Sem{L}$ for
just this map. 
The evaluation of modalities is defined using canonical
$M_1$-algebras:
\begin{defn}[Canonical algebras]
  The \emph{$0$-part} of an~$M_1$-algebra~$A$ is the $M_0$-algebra
  $(A_0,a^{00})$. Taking $0$-parts defines a functor $U_0$ from
  $M_1$-algebras to $M_0$-algebras. An $M_1$-algebra is
  \emph{canonical} if it is free, w.r.t.\ $U_0$, over its
  $0$-part. For~$A$ canonical and a modality $L\in\Lambda$, we
  denote the unique morphism $A_1\to\Omega$ extending an
  $M_0$-morphism $f\colon A_0\to\Omega$ to an $M_1$-morphism $A\to\Sem{L}$
  by~$\Sem{L}(f)$, i.e.\ $\Sem{L}(f)$ is the unique $M_0$-morphism
  such that
  \iffull
  the square below commutes:
  \begin{equation}\label{diag:L(f)}
    \begin{tikzcd}
      M_1 A_0 \arrow{r}{M_1f} \arrow{d}[left]{a^{10}} 
      & M_1\Omega\arrow{d}{\Sem{L}}\\
      A_1 \arrow{r}[below]{\Sem{L}(f)} & \Omega
    \end{tikzcd}
  \end{equation}
  \else
  the following equation holds:
  \begin{equation}\label{diag:L(f)}
   (M_1 A_0 \xrightarrow{M_1 f} M_1\Omega \xrightarrow{\Sem{L}} \Omega)
   =
   (M_1 A_0 \xrightarrow{a^{10}} A_1 \xrightarrow{\Sem{L}(f)} \Omega).
 \end{equation}
 \fi
\end{defn}
\begin{lemma}\label{lem:canonical}
  An $M_1$-algebra~$A$ is canonical iff \eqref{diag:algebra-coeq} is a
  (reflexive) coequalizer diagram in the category of $M_0$-algebras.
\end{lemma}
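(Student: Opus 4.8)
The plan is to translate both sides of the equivalence into explicit universal properties and compare them directly. Unravelling the definitions, ``$A$ is canonical'' means: for every $M_1$-algebra $B$ and every $M_0$-morphism $g\colon(A_0,a^{00})\to(B_0,b^{00})$ there is a unique $M_1$-morphism $\bar g\colon A\to B$ with $\bar g_0=g$; and ``\eqref{diag:algebra-coeq} is a coequalizer in the category of $M_0$-algebras'' means that $a^{10}\colon M_1A_0\to A_1$ is universal among $M_0$-morphisms out of $M_1A_0$ coequalizing $\mu^{10}_{A_0}$ and $M_1a^{00}$. A preliminary observation used in both directions: \eqref{diag:algebra-coeq} is a reflexive fork. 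It commutes by the coequalization law of the $M_1$-algebra~$A$, and $M_1\eta_{A_0}$ is a common section, since $\mu^{10}_{A_0}\cdot M_1\eta_{A_0}=\id_{M_1A_0}$ by the graded-monad unit law $\mu^{n0}\cdot M_n\eta=\id$ and $M_1a^{00}\cdot M_1\eta_{A_0}=M_1(a^{00}\cdot\eta_{A_0})=\id_{M_1A_0}$ by the $M_0$-algebra unit law for $(A_0,a^{00})$. Throughout, $M_1A_0$ and $M_1M_0A_0$ are regarded as $M_0$-algebras via $\mu^{01}_{A_0}$ and $\mu^{01}_{M_0A_0}$, and the maps $\mu^{10}_{A_0}$, $M_1a^{00}$, $a^{10}$ are $M_0$-morphisms by naturality of $\mu^{01}$ and graded-monad associativity, as already noted after \eqref{diag:algebra-coeq}.

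For the ``if'' direction, suppose \eqref{diag:algebra-coeq} is a coequalizer and fix $B$ and an $M_0$-morphism $g\colon(A_0,a^{00})\to(B_0,b^{00})$. The composite $b^{10}\cdot M_1g\colon M_1A_0\to B_1$ is an $M_0$-morphism (as $M_1g$ is one by naturality of $\mu^{01}$ and $b^{10}$ is one by homomorphy of $B$), and I will check that it coequalizes $\mu^{10}_{A_0}$ and $M_1a^{00}$: using naturality of $\mu^{10}$, then the coequalization law for $B$ (that is, $b^{10}\cdot\mu^{10}_{B_0}=b^{10}\cdot M_1b^{00}$), then $g$ being an $M_0$-morphism, both precompositions collapse to $b^{10}\cdot M_1(g\cdot a^{00})$. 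Hence $b^{10}\cdot M_1g$ factors through the coequalizer $a^{10}$ via a unique $M_0$-morphism $\bar g_1\colon(A_1,a^{01})\to(B_1,b^{01})$, and $\bar g:=(g,\bar g_1)$ satisfies all three defining equations of an $M_1$-morphism: the $(0,0)$-equation is that $g$ is an $M_0$-morphism, the $(0,1)$-equation is that $\bar g_1$ is an $M_0$-morphism, and the $(1,0)$-equation is $\bar g_1\cdot a^{10}=b^{10}\cdot M_1g$, which holds by construction. Uniqueness of $\bar g_1$, hence of $\bar g$, follows since $a^{10}$, being a coequalizer, is epi. Thus $A$ is free over its $0$-part, i.e.\ canonical.

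For the ``only if'' direction, suppose $A$ is canonical and let $h\colon M_1A_0\to D$ be any $M_0$-morphism with $h\cdot\mu^{10}_{A_0}=h\cdot M_1a^{00}$. Define an $M_1$-algebra $B$ by $B_0=A_0$, $b^{00}=a^{00}$, $B_1=D$ with its given $M_0$-structure, and main structure map $b^{10}=h$: homomorphy of $B$ is exactly that $h$ is an $M_0$-morphism, and the coequalization law for $B$ is exactly the assumed identity on $h$. Applying canonicity to $g=\id_{A_0}$ yields a unique $M_1$-morphism $\bar g\colon A\to B$ with $\bar g_0=\id_{A_0}$; set $\bar h:=\bar g_1$, so the $(1,0)$-equation gives $\bar h\cdot a^{10}=h$ and the $(0,1)$-equation gives that $\bar h$ is an $M_0$-morphism. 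Conversely, any $M_0$-morphism $\bar h'\colon A_1\to D$ with $\bar h'\cdot a^{10}=h$ makes $(\id_{A_0},\bar h')$ an $M_1$-morphism $A\to B$, so $\bar h'=\bar g_1=\bar h$ by the uniqueness part of canonicity. Hence $a^{10}$ is the coequalizer of the reflexive pair in the category of $M_0$-algebras.

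The argument is essentially a diagram chase against the definition of $M_1$-morphism, and I do not expect a serious obstacle. The only step needing actual computation is showing that $b^{10}\cdot M_1g$ coequalizes $\mu^{10}_{A_0}$ and $M_1a^{00}$, and the main thing to be careful about is keeping track of which $M_0$-algebra structure is in play on each object ($\mu^{01}_{A_0}$ on $M_1A_0$, $a^{01}$ on $A_1$, and so on) so that the maps appearing really are $M_0$-morphisms and the factorisations through \eqref{diag:algebra-coeq} are legitimate.
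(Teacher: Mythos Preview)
Your proof is correct and follows essentially the same approach as the paper's: in both directions you translate between the universal property of the coequalizer and freeness over the $0$-part by constructing the auxiliary $M_1$-algebra with $0$-part $(A_0,a^{00})$ and main structure map the given coequalizing morphism. The computations and the verification of the reflexivity of the pair match the paper's as well.
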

\noindent By the above lemma, we obtain a key example of canonical
$M_1$-algebras:
\begin{corollary}
  If $\BBM$ is a depth-$1$ graded monad, then for every~$n$ and every
  set~$X$, the $M_1$-algebra with carriers $M_nX,M_{n+1}X$ and
  multiplication as algebra structure is canonical.
\end{corollary}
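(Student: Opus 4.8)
The plan is to identify the $M_1$-algebra in question concretely and then reduce canonicity to the coequalizer characterization of depth-$1$ graded monads in \autoref{P:d1}. Concretely, the asserted $M_1$-algebra has $0$-part $(M_nX,\mu^{0n}_X)$, $1$-part $(M_{n+1}X,\mu^{0,n+1}_X)$, and main structure map $\mu^{1n}_X\colon M_1M_nX\to M_{n+1}X$. The first thing I would check is that this data really is an $M_1$-algebra: the unit laws of $\BBM$ turn $(M_nX,\mu^{0n}_X)$ and $(M_{n+1}X,\mu^{0,n+1}_X)$ into Eilenberg--Moore algebras for $M_0$, and the required homomorphy and coequalization conditions (the two pertinent instances of the right-hand square in~\eqref{diag:alg}) are, after substituting the structure maps above, precisely the instances of the associative law of $\BBM$ at the index triples $(0,1,n)$ and $(1,0,n)$, respectively. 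This part is pure diagram bookkeeping.

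Next I would invoke \autoref{lem:canonical}, which says that this $M_1$-algebra is canonical if and only if its instance of~\eqref{diag:algebra-coeq} is a reflexive coequalizer in the category of $M_0$-algebras. Substituting $A_0=M_nX$, $a^{00}=\mu^{0n}_X$ and $a^{10}=\mu^{1n}_X$ into~\eqref{diag:algebra-coeq}, the parallel pair becomes $\mu^{10}_{M_nX},\,M_1\mu^{0n}_X\colon M_1M_0M_nX\to M_1M_nX$ and the coequalizing map becomes $\mu^{1n}_X\colon M_1M_nX\to M_{n+1}X$. The key observation is then simply that this is exactly the $X$-component of the diagram~\eqref{eq:mu1n} appearing in \autoref{P:d1} (with the common section $M_1\eta_{M_nX}$ witnessing reflexivity). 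Since $\BBM$ is assumed depth-$1$, \autoref{P:d1} tells us that~\eqref{eq:mu1n} is objectwise a coequalizer diagram in $\Set^{M_0}$; in particular its $X$-component is a coequalizer in the category of $M_0$-algebras, which is exactly what is needed.

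I do not expect a real obstacle here: the only points that require attention are (i) matching the two parallel maps in~\eqref{diag:algebra-coeq} for the concrete algebra with those in~\eqref{eq:mu1n} (they agree up to the irrelevant order in which the parallel pair is listed), and (ii) confirming that the \emph{reflexive} qualifier in \autoref{lem:canonical} is harmless, since $M_1\eta_{M_nX}$ is a common section of the pair by the unit laws. Everything else is a direct application of \autoref{lem:canonical} together with \autoref{P:d1}.
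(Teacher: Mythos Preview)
Your proposal is correct and follows exactly the route the paper intends: the corollary is stated immediately after \autoref{lem:canonical} with the remark ``By the above lemma,'' and your argument spells out precisely this reduction, matching the instance of~\eqref{diag:algebra-coeq} for the algebra $(M_nX,M_{n+1}X,\mu^{0n}_X,\mu^{0,n+1}_X,\mu^{1n}_X)$ with the $X$-component of~\eqref{eq:mu1n} and then invoking \autoref{P:d1}. Your additional care in verifying the $M_1$-algebra axioms and the reflexivity of the pair is appropriate but not something the paper makes explicit.
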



\noindent Further, we interpret truth constants $c\in\Theta$ as
elements of~$\Omega$, understood as maps $\hat{c}\colon 1\to\Omega$,
and $k$-ary propositional operators $p\in\PropOps$ as $M_0$-homomorphisms
$
\Sem{p}\colon\Omega^k\to\Omega.
$
In our examples on the linear time -- branching time spectrum,~$M_0$
is either the identity or, most of the time, the finite powerset
monad. In the former case, all truth functions are $M_0$-morphisms. In
the latter case, the $M_0$-morphisms $\Omega^k\to \Omega$ are the
join-continuous functions; in the standard case where $\Omega=2$ is
the set of Boolean truth values, such functions~$f$ have the form
$f(x_1,\dots,x_k)=x_{i_1}\lor\dots\lor x_{i_l}$, where
$i_1,\dots,i_l\in\{1,\dots,k\}$. We will see one case where $M_0$ is
the distribution monad; then $M_0$-morphisms are affine
maps.

The semantics of a formula~$\phi$ in graded logic is defined recursively
as an $M_0$-morphism
$\Sem{\phi}\colon (M_n1, \mu^{0n}_1) \to (\Omega, o)$ by
\begin{equation*}
  \Sem{c}  = (M_01\xrightarrow{M_0\hat c}M_0\Omega \xrightarrow{o}\Omega)\quad
  \Sem{p(\phi_1,\dots,\phi_k)}  =\Sem{p}\cdot\langle\Sem{\phi_1},\dots,
                                 \Sem{\phi_k}\rangle\quad
  \Sem{L\phi}  = \Sem{L}(\Sem{\phi}).
\end{equation*}
The evaluation of~$\phi$ in a coalgebra $\gamma\colon X\to GX$ is then given
by composing with the trace sequence, i.e.\ as
\begin{equation}\label{eq:formula-eval}
  X\xrightarrow{M_n!\cdot\gamma^{(n)}} M_n1\xrightarrow{\Sem\phi}\Omega.
\end{equation}
In particular, graded logics are, by construction, invariant under the
graded semantics. 

\begin{example}[Graded logics]\label{expl:logics}
  We recall the two most basic examples, fixing $\Omega=2$ in both
  cases, and $\top$ as the only truth constant:
  \begin{longitemslist}
  \item \emph{Finite-depth behavioural equivalence:} Recall that the
    graded monad $M_nX=G^nX$ captures finite-depth behavioural
    equivalence on $G$-coalgebras. Since~$M_0$ is the identity monad,
    $M_0$-algebras are just sets. Thus, every function $2^k\to 2$ is
    an $M_0$-morphism, 
    so we can use all Boolean operators as propositional
    operators. Moreover, $M_1$-algebras are just maps
    $a^{10}\colon GA_0\to A_1$. Such an $M_1$-algebra is canonical iff
    $a^{10}$ is an isomorphism, and modalities are interpreted as
    $M_1$-algebras $G2\to 2$, with the evaluation according
    to~\eqref{diag:L(f)} and~\eqref{eq:formula-eval} corresponding
    precisely to the semantics of modalities in coalgebraic
    logic~\eqref{eq:coalg-modality}. Summing up, we obtain precisely
    coalgebraic modal logic as summarized above in this case. In our
    running example $G=\Pfin(\Act\times(-))$, we take modalities
    $\Diamond_\sigma$ as above, with
    $\Sem{\Diamond_\sigma}\colon\Pfin(\Act\times 2)\to 2$ defined by
    $\Sem{\Diamond_\sigma}(S)=\top$ iff $(\sigma,\top)\in S$,
    obtaining precisely classical Hennessy-Milner
    logic~\cite{HennessyMilner85}.
  \item \emph{Trace equivalence:} Recall that the trace semantics of
    labelled transition systems with actions in~$\Act$ is modelled by
    the graded monad $M_nX=\Pfin(\Act^n\times X)$. As indicated above,
    in this case we can use disjunction as a propositional operator
    since $M_0=\Pfin$. Since the graded theory for $M_n$ specifies for
    each $\sigma\in\Act$ a unary depth-$1$ operation that distributes
    over joins, we find that the maps $\Sem{\Diamond_\sigma}$ from the
    previous example (unlike their duals $\Box_\sigma$) induce
    $M_1$-algebras also in this case, so we obtain a graded trace
    logic featuring precisely diamonds and disjunction, as expected.
  \end{longitemslist}
  We defer the discussion of further examples, including ones where
  $\Omega=[0,1]$, to the next section, where we will simultaneously
  illustrate the generic expressiveness result
  (\autoref{expl:ltbt-logics}). 
\end{example}

\begin{remark}
  One important class of examples where the above approach to
  characteristic logics will \emph{not} work without substantial
  further development are simulation-like equivalences, whose
  characteristic logics need
  conjunction~\cite{vanglabbeek2001linear}. Conjunction is not an
  $M_0$-morphism for the corresponding graded monads identified in
  \autoref{sec:ltbt}, which both have $M_0=\Pfin$. A related and maybe
  more fundamental observation is that formula evaluation is not
  $M_0$-morphic in the presence of conjunction; e.g.\ over simulation
  equivalence, the evaluation map
  $M_11=\Down(\Act\times\Pfin(1))\to 2$ of the formula
  $\trdiamond{\sigma}\top\land\trdiamond{\tau}\top$ fails to be
  join-continuous for distinct $\sigma,\tau\in\Act$. We leave the
  extension of our logical framework to such cases to future work,
  expecting a solution in elaborating the theory of graded monads,
  theories, and algebras over the category of partially ordered sets,
  where simulations live more naturally (e.g.~\cite{KapulkinEA12}).
\end{remark}



%
\section{Expressiveness}\label{sec:expr}

We now present our main result, an expressiveness criterion for graded
logics, which states that a graded logic characterizes the given
graded semantics if it has enough modalities propositional operators,
and truth constants. Both the criterion and its proof now fall into
place naturally and easily, owing to the groundwork laid in the
previous section, in particular the reformulation of the semantics in
terms of canonical algebras:
\begin{defn}\label{def:separation}
  We say that a graded logic with set~$\Omega$ of truth values and
  sets~$\Theta$,~$\PropOps$,~$\Lambda$ of truth constants,
  propositional operators, and modalities, respectively, is
  \begin{longitemslist}
  \item \emph{depth-$0$ separating} if the family of maps
    $\Sem{c}\colon M_01\to\Omega$, for truth constants
    $c\in\Theta$, is jointly injective; and
  \item \emph{depth-$1$ separating} if, whenever $A$ is a canonical
    $M_1$-algebra and $\FA$ is a jointly injective set of
    $M_0$-homomorphisms $A_0\to\Omega$ that is closed under the
    propositional operators in~$\PropOps$ (in the sense that
    $\Sem{p}\cdot \langle f_1,\dots,f_k\rangle\in\FA$ for
    $f_1,\dots,f_k\in\FA$ and $k$-ary $p\in\PropOps$), then the
    set 
    \[
      \Lambda(\FA)\coloneqq\{\Sem{L}(f)\colon A_1\to\Omega\mid L\in\Lambda,f\in\FA\}.
    \]
    of maps is jointly injective.
  \end{longitemslist}
\end{defn}
\begin{theorem}[Expressiveness]\label{thm:expr}
  If a graded logic is both depth-$0$ separating and depth-$1$
  separating, then it is expressive.
\end{theorem}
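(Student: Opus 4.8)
The goal is to show that depth-$0$ and depth-$1$ separation together imply that $\alpha$-trace equivalence coincides with logical indistinguishability. One direction, invariance (graded equivalence implies logical indistinguishability), is already established by construction via \eqref{eq:formula-eval}, so the real content is expressiveness: if states $x \in X$ and $y \in Y$ of coalgebras $\gamma\colon X\to GX$, $\delta\colon Y\to GY$ satisfy exactly the same formulae, then $M_n!\cdot\gamma^{(n)}(x) = M_n!\cdot\delta^{(n)}(y)$ for all $n$. The natural approach is induction on $n$, showing that at each stage the set $\FA_n$ of evaluation maps $\Sem{\phi}\colon M_n1\to\Omega$ for formulae $\phi$ of depth $n$ is jointly injective on the image of the trace sequence — in fact, I would aim to show it is jointly injective on all of $M_n1$, which is cleaner and makes the induction go through.

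First I would set up the base case: formulae of depth $0$ are built from truth constants by propositional operators, so the depth-$0$ evaluation maps are exactly the closure of $\{\Sem{c}\mid c\in\Theta\}$ under the $\Sem{p}$ for $p\in\PropOps$. Depth-$0$ separation says the generators $\Sem{c}\colon M_01\to\Omega$ are jointly injective, hence so is the larger closed family, giving joint injectivity of the depth-$0$ formula maps on $M_01$. For the inductive step, suppose the depth-$n$ formula maps form a jointly injective family $\FA_n$ of $M_0$-homomorphisms $M_n1\to\Omega$ closed under propositional operators. Now apply depth-$1$ separation to the canonical $M_1$-algebra with carriers $M_n1$ and $M_{n+1}1$ and multiplication as structure — this is canonical by the Corollary following \autoref{lem:canonical} — and to the family $\FA = \FA_n$. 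Depth-$1$ separation then yields that $\Lambda(\FA_n) = \{\Sem{L}(\Sem{\phi})\colon M_{n+1}1\to\Omega \mid L\in\Lambda,\ \phi\text{ of depth }n\} = \{\Sem{L\phi}\mid L\in\Lambda\}$ is jointly injective on $M_{n+1}1$. Finally, closing this family under the propositional operators $\Sem{p}$ preserves joint injectivity (closing a jointly injective family under arbitrary further maps keeps it jointly injective), and the resulting closed family is precisely the set of evaluation maps of depth-$(n+1)$ formulae, establishing the induction hypothesis at level $n+1$. Composing with the trace sequence maps $M_n!\cdot\gamma^{(n)}$ and $M_n!\cdot\delta^{(n)}$ then shows that sameness of all formula values forces sameness of all trace-sequence values, i.e.\ $\alpha$-trace equivalence.

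The one point requiring care — and the main obstacle — is the bookkeeping around the syntactic layering of the logic: truth constants may occur at top level only in depth-$0$ formulae, while depth-$(n+1)$ formulae are generated from depth-$(n+1)$ subformulae under propositional operators and from $L\psi$ with $\psi$ of depth $n$ under modalities. One must check that the set of depth-$(n+1)$ evaluation maps is exactly the propositional closure of $\Lambda(\FA_n)$ — in particular that $\Lambda(\FA_n)$ is correctly identified with $\{\Sem{L\phi}\}$ using the defining equation \eqref{diag:L(f)} of $\Sem{L}(\cdot)$ and the recursive clause $\Sem{L\phi} = \Sem{L}(\Sem{\phi})$ — and that the $M_0$-homomorphism property of formula evaluation (stated in the paper: $\Sem{\phi}\colon(M_n1,\mu^{0n}_1)\to(\Omega,o)$ is an $M_0$-morphism) is maintained so that depth-$1$ separation is applicable at each stage. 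Once this correspondence between syntax and the iterated "propositional closure of $\Lambda(-)$" operator is pinned down, the induction is essentially immediate, which matches the paper's remark that the proof "falls into place naturally and easily."
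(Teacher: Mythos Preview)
Your proposal is correct and follows essentially the same approach as the paper: induction on depth, showing that the set of evaluation maps $\Sem{\phi}\colon M_n1\to\Omega$ of depth-$n$ formulae is jointly injective, with the base case by depth-$0$ separation and the inductive step by applying depth-$1$ separation to the canonical $M_1$-algebra on $M_n1,M_{n+1}1$. Your additional bookkeeping remarks about $M_0$-morphism preservation and the syntactic layering are accurate but more explicit than the paper's write-up, which leaves those checks implicit.
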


\begin{example}[Logics for bisimilarity]\label{expl:bisim-logics}
  We note first that the existing coalgebraic Hennessy-Milner theorem,
  for branching time equivalences and coalgebraic modal logic with
  full Boolean base over a finitary
  functor~$G$~\cite{Pattinson04,Schroder08}, as recalled in
  Section~\ref{sec:logics}, is a special case of \autoref{thm:expr}:
  We have already seen in \autoref{expl:logics} that coalgebraic modal
  logic in the above sense is an instance of our framework for the
  graded monad $M_nX=G^nX$. Since $M_0=\id$ in this case, depth-$0$
  separation is vacuous. As indicated in \autoref{expl:logics},
  canonical $M_1$-algebras are w.l.o.g.\ of the form $\id\colon GX\to GX$,
  where for purposes of proving depth-$1$ separation, we can restrict
  to finite~$X$ since~$G$ is finitary. Then, a set~$\FA$ as in
  \autoref{def:separation} is already the whole powerset $2^X$, so
  depth-$1$ separation is exactly the previous notion of separation.
 
  A well-known particular case is probabilistic bisimilarity on Markov
  chains, for which an expressive logic needs only probabilistic
  modalities $\Diamond_p$ `with probability at least~$p$' and
  conjunction~\cite{DesharnaisEA98}. This result (later extended to
  more complex composite functors~\cite{MossViglizzo06}) is also
  easily recovered as an instance of \autoref{thm:expr}, using the
  same standard lemma from measure theory as in \emph{op.~cit.},
  which states that measures are uniquely determined by their values
  on a generating set of the underlying $\sigma$-algebra that is
  closed under finite intersections (corresponding to the set~$\FA$
  from \autoref{def:separation} being closed under conjunction).
\end{example}

\begin{remark}
  For behavioural equivalence, i.e.\ $M_nX=G^nX$ as in the above
  example, the inductive proof of our expressiveness theorem
  essentially instantiates to Pattinson's proof of the coalgebraic
  Hennessy-Milner theorem by induction over the terminal
  sequence~\cite{Pattinson04}. One should note that although the
  coalgebraic Hennessy-Milner theorem can be shown to hold for larger
  cardinal bounds on the branching by means of a direct quotienting
  construction~\cite{Schroder08}, the terminal sequence argument goes
  beyond finite branching only in corner cases.
\end{remark}

\begin{example}[Expressive graded logics on the linear time -- branching time spectrum]\label{expl:ltbt-logics}
  We next extract graded logics from some of the graded monads
  for the linear time -- branching time spectrum introduced in
  \autoref{sec:ltbt}, and show how in each case, expressiveness is an
  instance of \autoref{thm:expr}. Bisimilarity is already covered by
  the previous example. Depth-$0$ separation is almost always trivial
  and not mentioned further. Unless mentioned otherwise, all logics
  have disjunction, enabled by $M_0$ being powerset as discussed in
  the previous section. Most of the time, the logics are essentially
  already given by van Glabbeek (with the exception that we show that
  one can add disjunction)~\cite{vanglabbeek2001linear}; the emphasis
  is entirely on uniformization.
  \begin{longitemslist}
  \item \emph{Trace equivalence:} As seen in \autoref{expl:logics},
    the graded logic for trace equivalence features (disjunction and)
    diamond modalities $\Diamond_\sigma$ indexed over actions
    $\sigma\in\Act$. The ensuing proof of depth-$1$ separation uses
    canonicity of a given $M_1$-algebra~$A$ only to obtain that the
    structure map $a^{10}$ is surjective. The other key point is that
    a jointly injective collection~$\FA$ of $M_0$-homomorphisms
    $A_0\to 2$, i.e.\ join preserving maps, has the stronger
    separation property that whenever $x\not\le y$ then there exists
    $f\in\FA$ such that $f(x)=\top$ and $f(y)=\bot$.
  \item Graded logics for completed traces, readiness, failures, ready
    traces, and failure traces are developed from the above by adding
    constants or additionally indexing modalities over sets of
    actions, with only little change to the proofs of depth-$1$
    separation. For completed trace equivalence, we just add a $0$-ary
    modality $\star$ indicating deadlock. For ready trace equivalence,
    we index the diamond modalities $\Diamond_\sigma$ with sets
    $I\subseteq\Act$; formulae $\Diamond_{\sigma,I}\phi$ are then read
    `the current ready set is~$I$, and there is a $\sigma$-successor
    satisfying~$\phi$'. For failure trace equivalence we proceed in
    the same way but read the index~$I$ as `$I$ is a failure set at
    the current state'. For readiness equivalence and failures
    equivalence, we keep the modalities~$\Diamond_\sigma$ unchanged
    from trace equivalence and instead introduce $0$-ary
    modalities~$r_I$ indicating that~$I$ is the ready set or a failure
    set, respectively, at the current state, thus ensuring that
    formulae do not continue after postulating a ready set.
  \end{longitemslist}
\end{example}

\begin{example}[Probabilistic traces]\label{expl:prob-trace}
  We have recalled in \autoref{sec:ltbt} that probabilistic trace
  equivalence of generative probabilistic transition systems can be
  captured as a graded semantics using the graded
  monad~$M_nX=\Dist(\Act^n\times X)$, with $M_0$-algebras being convex
  algebras. In earlier work~\cite{MiliusEA15} we have noted that a
  logic over the set $\Omega=[0,1]$ of truth values (with the usual
  convex algebra structure) featuring rational truth constants, affine
  combinations as propositional operators (as indicated in
  \autoref{sec:logics}), and modal operators $\langle\sigma\rangle$,
  interpreted by $M_1$-algebras
  $\Sem{\langle\sigma\rangle}\colon M_1[0,1]\to[0,1]$ defined by
  \iffull
  \begin{equation*}\textstyle
    \Sem{\langle\sigma\rangle}(\mu)=\sum_{r\in [0,1]}r\mu(\sigma,r)
  \end{equation*}
  \else\/$\Sem{\langle\sigma\rangle}(\mu)=\sum_{r\in
    [0,1]}r\mu(\sigma,r)$ \fi
  is invariant under probabilistic trace equivalence. By our
  expressiveness criterion, we recover the result that this logic
  is expressive for probabilistic trace semantics
  (see e.g.~\cite{bernardo-botta:characterising-logics}).
\end{example}
\section{Conclusion and Future Work}

We have provided graded monads modelling a range of process
equivalences on the linear time -- branching time spectrum, presented
in terms of carefully designed graded algebraic theories. From these
graded monads, we have extracted characteristic modal logics for the
respective equivalences systematically, following a paradigm of graded
logics that grows out of a natural notion of graded algebra. Our main
technical results concern the further development of the general
framework for graded logics; in particular, we have introduced a
first-class notion of propositional operator, and we have established
a criterion for \emph{expressiveness} of graded logics that
simultaneously takes into account the expressive power of the
modalities and that of the propositional base. (An open question that
remains is whether an expressive logic always exists, as it does in
the branching-time setting~\cite{Schroder08}.) Instances of this
result include, for instance, the coalgebraic Hennessy-Milner
theorem~\cite{Pattinson04,Schroder08}, Desharnais et al.'s
expressiveness result for probabilistic modal logic with only
conjunction~\cite{DesharnaisEA98}, and expressiveness for various
logics for trace-like equivalences on non-deterministic and
probabilistic systems. The emphasis in the examples has been on
well-researched equivalences and logics for the basic case of labelled
transition systems, aimed at demonstrating the versatility of graded
monads and graded logics along the axis of granularity of system
equivalence. The framework as a whole is however parametric also over
the branching type of systems and in fact over the base category
determining the structure of state spaces; an important direction for
future research is therefore to capture (possibly new) equivalences
and extract expressive logics on other system types such as
probabilistic systems (we have already seen probabilistic trace
equivalence as an instance; see~\cite{BonchiEA17} for a comparison of
some equivalences on probabilistic automata, which combine
probabilities and non-determinism) and nominal systems, e.g.\ nominal
automata~\cite{BojanczykEA14,SchroderEA17}. Moreover, we plan to
extend the framework of graded logics to cover also temporal logics,
using graded algebras of unbounded depth.

\bibliography{ltbtgradedmonads}

%
%
\iffull
\clearpage
\appendix

\section{Omitted Proofs and Details}


\subsection{Proof of \autoref{thm:depth-1-graded-monads-M0}}

The mentioned applicable instances of the graded monad laws
are the following: $(M_0,\unit,\mult^{00})$ is a monad,
$(M_1,\mult^{10})$ a (right) $M_0$-module, i.e.\ the following
diagrams commute:
  \[
    \begin{tikzcd}
      M_1 \ar{r}{M_1\unit} \arrow[dr,equal] & M_1M_0 \ar{d}{\mult^{10}}
          && M_1M_0M_0 \ar{r}{\mult^{10}M_0} \ar{d}{M_1\mult^{00}} &
          M_1M_0\ar{d}{\mult^{10}} %
        \\
        & M_1
          && M_1M_0\ar{r}{\mult^{10}} & M_1
    \end{tikzcd}
  \]
  $(M_1,\mult^{01})$ is a (left) $M_0$-module,
  \[
    \begin{tikzcd}
      M_1 \ar{r}{\unit M_1} \arrow[dr,equal] & M_0M_1 \ar{d}{\mult^{01}}
          && M_0M_0M_1 \ar{r}{M_0\mult^{01}} \ar{d}{\mult^{00}M_1} &
          M_0M_1\ar{d}{\mult^{01}} %
        \\
        & M_1
          && M_0M_1\ar{r}{\mult^{01}} & M_1
    \end{tikzcd}
  \]
  and $\mult^{10}$ is (componentwise) an $M_0$-algebra homomorphism,
  i.e.\ $\mu^{10}\cdot\mu^{01}M_0=\mu^{01}\cdot M_0\mu^{10}$:
  \[
    \begin{tikzcd}
      M_0M_1M_0 \ar{r}{M_0\mult^{10}}\ar{d}{\mult^{01}M_0} & M_0M_1 \ar{d}{\mult^{01}}\\
      M_1M_0 \ar{r}{\mult^{10}} & M_1
    \end{tikzcd}
  \]

\begin{proof}
  Note first that $\mult^{10}$ is a coequalizer in $\Set^{M_0}$, 
  in fact, a split coequalizer
  \[
    \begin{tikzcd}
      M_1M_0M_0
        \arrow[shift left=1]{r}[above]{M_1\mult^{00}}
        \arrow[shift right=1]{r}[below]{\mult^{10}M_0}
      &
      M_1M_0
        \arrow[bend left=50]{l}{M_1M_0\unit}
        \arrow{r}{\mult^{10}}
      &
      M_1
        \arrow[bend left=40]{l}{M_1\unit}
    \end{tikzcd}
  \]
  For every $k\geq 1$, we now define $M_k$, $\mult^{0,k+1}\colon M_0M_{k+1}\to M_{k+1}$ and
  $\mult^{1k}\colon M_1M_k\to M_{k+1}$ inductively by
  taking the coequalizer
  \begin{equation}\label{diag:depth-1-mult-1-k-coequalizer}
    \begin{tikzcd}
      M_1M_0M_k
        \arrow[shift left=1]{r}[above]{M_1\mult^{0k}}
        \arrow[shift right=1]{r}[below]{\mult^{10}M_k}
      &
      M_1M_k \arrow{r}{\mult^{1k}} & M_{k+1}
    \end{tikzcd}
  \end{equation}
  (objectwise) in $\Set^{M_0}$; 
  here we use that $M_1$ preserves $M_0$-algebras, in particular $\mult^{0k}$, and
  $\mult^{0,k+1}$ is given by the $M_0$-algebra structures of $M_{k+1}X$ for any $X$.
  Thus, we have
  \[
    \begin{tikzcd}
      M_{k+1}\ar{r}{\unit M_{k+1}}\ar[equal]{dr} & M_0M_{k+1} \ar{d}{\mult^{0,k+1}}
        && M_0M_0M_{k+1}\ar{r}{M_0\mult^{0,k+1}}\ar{d}{\mult^{00}M_{k+1}}
         & M_0M_{k+1}\ar{d}{\mult^{0,k+1}}
        \\
      & M_{k+1}
        && M_0M_{k+1}\ar{r}{\mult^{0,k+1}}
         & M_{k+1}
    \end{tikzcd}
  \]
  Note that the left-hand triangle above shows the unit laws for the
  $\mu^{0k}$. The remaining $\mult^{nk}$ for $n,k<\omega$ are again defined inductively, this time over
  $n$. In the induction step one uses the universal property of the coequalizer
  $\mult^{1n}M_k$ and that $\mult^{nk}$ is an $M_0$-algebra homomorphism:
  \begin{equation}\label{diag:depth-1-mult-n-k}
    \begin{tikzcd}[sep=large]
      M_1M_0M_nM_k
        \arrow[shift left=1]{r}[above]{M_1\mult^{0n}M_k}
        \arrow[shift right=1]{r}[below]{\mult^{10}M_nM_k}
        \arrow{d}{M_1M_0\mult^{nk}}
      & M_1M_nM_k
        \arrow[two heads]{r}{\mult^{1n}M_k}
        \arrow{d}{M_1\mult^{nk}}
      & M_{n+1}M_k
        \arrow[dashed]{d}{\mult^{n+1,k}}
      \\
      M_1M_0M_{n+k}
        \arrow[shift left=1]{r}[above]{M_1\mult^{0,n+k}}
        \arrow[shift right=1]{r}[below]{\mult^{10}M_{n+k}}
      & M_1M_{n+k}
        \arrow[two heads]{r}{\mult^{1,n+k}}
      & M_{n+k+1}
    \end{tikzcd}
  \end{equation}
  Indeed, since the two left-hand squares commute, we obtain a unique morphism on the
  right-hand edge making the right-hand square commutative.
  We verify the necessary properties of $\mult^{n+1,k}$
  \begin{longitemslist}
    \item For $k=0$ we have $\mult^{n+1,0}\cdot M_{n+1}\unit = \id$, using that
      $\mult^{1n}$ is an epimorphism:
      \[
        \begin{tikzcd}
          M_1M_n
            \arrow[bend right=60, equal]{dd}
            \arrow{d}{M_1M_n\unit}
            \arrow[two heads]{r}{\mult^{1n}}
          & M_{n+1}
            \arrow{d}{M_{n+1}\unit}
          \\
          M_1M_nM_0
            \arrow{r}{\mult^{1n}M_0}
            \arrow{d}{M_1\mult^{n0}}
          & M_{n+1}M_0
            \arrow{d}{\mult^{n+1,0}}
          \\
          M_1M_n
            \arrow[two heads]{r}{\mult^{1n}}
          & M_{n+1}
        \end{tikzcd}
      \]
      Indeed, the lower square commutes by the definition of $\mult^{n+1,0}$ and the upper one
      by naturality. Since the left-hand edge is the identity by induction hypothesis, and
      $\mult^{1n}$ is an epimorphism, the right-hand edge is the identity.
    \item The remaining associativity laws for $\mult^{nk}$ are proved
      as follows.
      \begin{longitemslist}
        \item For $\mult^{0k}$, $k\geq 2$, we need to show for every $m,l$ with $m+l=k$ that
          the following diagram commutes:
          \begin{equation}\label{depth-1-mult-0-l-m}
            \begin{tikzcd}
              M_0M_lM_m
                \arrow{r}{M_0\mult^{lm}}
                \arrow{d}{\mult^{0l}M_m}
              & M_0M_k
                \arrow{d}{\mult^{0k}}
              \\
              M_lM_m
                \arrow{r}{\mult^{lm}}
              & M_k
            \end{tikzcd}
          \end{equation}
          This holds since $\mult^{lm}$ is obtained by the universal property of a coequalizer
          in $M_0$-algebras, hence it is an $M_0$-algebra morphism.

        \item Now we show for every $k$ and $n\geq 1$ that for every $l,m$ with $l+m=n$
          the following diagram commutes:
          \[
            \begin{tikzcd}
              M_lM_mM_k
                \arrow{r}{\mult^{lm}M_k}
                \arrow{d}{M_l\mult^{mk}}
              & M_nM_k
                \arrow{d}{\mult^{nk}}
              \\
              M_lM_{m+k}
                \arrow{r}{\mult^{l,m+k}}
              & M_{n+k}
            \end{tikzcd}
          \]
          For $l=0$ this holds since $m=n$ and $\mult^{nk}$ is an $M_0$-algebra homomorphism.
          For $l\geq 1$, consider the following diagram:
          \[
            \begin{tikzcd}
              M_1M_{l-1}M_mM_k
                  \arrow{rrr}{M_1\mult^{l-1,m}M_k}
                  \arrow{ddd}{M_1M_{l-1}\mult^{mk}}
                  \arrow[two heads]{dr}{\mult^{1,l-1}M_mM_k}
              & & & M_1M_{n-1}M_k
                  \arrow{ddd}{M_1\mult^{n-1,k}}
                  \arrow{dl}[above left]{\mult^{1,n-1}M_k}
              \\
              & M_lM_mM_k
                \arrow{r}{\mult^{lm}M_k}
                \arrow{d}{M_l\mult^{mk}}
              & M_nM_k
                \arrow{d}{\mult^{nk}}
              & \\
              & M_lM_{m+k}
                \arrow{r}{\mult^{l,m+k}}
              & M_{n+k}
              & \\
              M_1M_{l-1}M_{m+k}
                  \arrow{rrr}[below]{M_1\mult^{l-1,m+k}}
                  \arrow{ur}[below right]{\mult^{1,l-1}M_{k+1}}
              & & & M_1M_{n-1+k}
                  \arrow{ul}{\mult^{1,n-1+k}}
            \end{tikzcd}
          \]
          Its left-hand part commutes by naturality of $\mult^{1,l-1}$, the upper, right-hand
          and lower parts commute by definition of $\mult^{n+1,k}$
          (\hyperref[diag:depth-1-mult-n-k]{Diagram~\eqref{diag:depth-1-mult-n-k}}) or by
          \hyperref[diag:depth-1-mult-1-k-coequalizer]{Diagram~\eqref{diag:depth-1-mult-1-k-coequalizer}}
          in case $l$, $n$ and $l$, respectively, are equal to $1$.
          Now proceed by induction on $n$: For $n=1$ we have $l=1$ and $m=0$; thus the outside
          commutes by the same argument as above,
          \hyperref[depth-1-mult-0-l-m]{Diagram~\eqref{depth-1-mult-0-l-m}}.
          In the induction step, the outside commutes by induction hypothesis.
          Thus, the desired inside commutes when precomposed by the epimorphism
          $\mult^{1,l-1}M_mM_k$, thus it commutes.
        \item It remains to prove for every $k$ and $n\geq 1$ that for $l$,$m$ with $l+m=k$ the
          following diagram commutes:
          \[
            \begin{tikzcd}
              M_nM_lM_m
                \arrow{r}{M_n\mult^{lm}}
                \arrow{d}{\mult^{nl}M_m}
              & M_nM_k
                \arrow{d}{\mult^{nk}}
              \\
              M_{n+l}M_m
                \arrow{r}{\mult^{n+l,m}}
              & M_{n+k}
            \end{tikzcd}
          \]
          This is done analogously, i.e.\ by induction on $n$:
          \[
            \begin{tikzcd}
              M_1M_{n-1}M_lM_m
                \arrow{rrr}{M_1M_{n-1}\mult^{lm}}
                \arrow{ddd}{M_1\mult^{n-1,l}M_m}
                \arrow[two heads]{dr}{\mult^{1,n-1}M_lM_m}
              & & & M_1M_{n-1}M_k
                \arrow{ddd}{M_1\mult^{n-1,k}}
                \arrow{dl}[above left]{\mult^{1,n-1}M_k}
              \\
              & M_nM_lM_m
                \arrow{r}{M_n\mult^{lm}}
                \arrow{d}{\mult^{nl}M_m}
              & M_nM_k
                \arrow{d}{\mult^{nk}}
              & \\
              & M_{n+l}M_m
                \arrow{r}{\mult^{n+l,m}}
              & M_{n+k}
              & \\
              M_1M_{n-1+l}M_m
                \arrow{rrr}[below]{M_1\mult^{n-1+l,m}}
                \arrow{ur}[below right]{\mult^{1,n-1+l}M_m}
              & & & M_1M_{n-1+k}
                \arrow{ul}{\mult^{1,n-1+k}}
            \end{tikzcd}
          \]
          Note that the upper part commutes for every $n$ by naturality of $\mult^{1,n-1}$.
          For $n=1$, the left-hand square and right-hand square commutes by
          \hyperref[diag:depth-1-mult-1-k-coequalizer]{Diagram~\eqref{diag:depth-1-mult-1-k-coequalizer}},
          and the lower square by \eqref{diag:depth-1-mult-n-k}. The outside commutes
          because $\mult^{lm}$ is an $M_0$-algebra homomorphism, thus the desired inner square
          commutes (when precomposed by the epimorphism $\mult^{1,n-1}M_lM_m$).
          For the induction step, the left-hand, right-hand and lower squares commute by the
          defining square \eqref{diag:depth-1-mult-n-k} and
          the outside by induction hypothesis. Again, the desired inner square commutes since
          $\mult^{1,n-1}M_lM_m$ is an epimorphism.\qedhere
      \end{longitemslist}
  \end{longitemslist}
\end{proof}

\subsection{Details for \autoref{E:d1}}
\label{S:d1}

\begin{proposition}\label{P:dist}
  Let $(T, \eta,\mu)$ be a monad and $F$ an endofunctor on $\cat
  C$. Then the graded monad $(M_n)_{n<\omega}$ of 
  \autoref{E:graded-monad}\ref{E:graded-monad:kleisli} is depth-1.
\end{proposition}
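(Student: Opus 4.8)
The plan is to use the characterization of depth-$1$ graded monads from \autoref{P:d1}, which, as noted there, makes sense over an arbitrary base category~$\cat C$: we must show that for every $n<\omega$ the diagram~\eqref{eq:mu1n} is objectwise a coequalizer in $\cat C^{M_0}$. Since $M_0=T$, the category $\cat C^{M_0}=\cat C^T$ is the ordinary Eilenberg--Moore category of~$T$, and each of $M_1M_0M_nX$, $M_1M_nX$, $M_{1+n}X$ carries its canonical free $T$-algebra structure (given by~$\mu$). The strategy is to show that, after applying the forgetful functor $U\colon\cat C^T\to\cat C$, the diagram becomes a \emph{split} (hence absolute) coequalizer; as the three maps in~\eqref{eq:mu1n} are $M_0$-algebra morphisms by the graded monad laws, and $U$ creates coequalizers of $U$-split pairs, this yields the required coequalizer in $\cat C^T$ for every~$n$, and hence depth-$1$.

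First I would unwind the maps. Fix~$n$ and an object~$X$ and write $Z=F^nX$, so that $M_nX=TZ$, $M_1M_nX=TFTZ$, $M_1M_0M_nX=TFTTZ$ and $M_{1+n}X=TF^{1+n}X=TFZ$. Using $\lambda^0=\id_T$ and $\lambda^1=\lambda$ together with the defining formula for $\mu^{nk}$ (so that $\mu^{0n}$ is $\mu F^n$, $\mu^{10}$ is $\mu F\cdot T\lambda$, and $\mu^{1n}$ is $\mu F^{1+n}\cdot T\lambda F^n$), the two parallel arrows $M_1\mu^{0n}$ and $\mu^{10}M_n$ become at~$X$ the maps
\[ TF\mu_Z,\qquad \mu_{FTZ}\cdot T\lambda_{TZ}\;\colon\; TFTTZ\longrightarrow TFTZ, \]
and $\mu^{1n}$ becomes $\mu_{FZ}\cdot T\lambda_Z\colon TFTZ\to TFZ$.

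Next I would exhibit the splitting: take $s=TF\eta_Z\colon TFZ\to TFTZ$ and $t=TFT\eta_Z\colon TFTZ\to TFTTZ$. The identity $\mu^{1n}_X\cdot s=\id$ reduces, via the Kleisli-law equation $\lambda\cdot F\eta=\eta F$, to the monad unit law $\mu\cdot T\eta=\id$ at~$FZ$; the identity $(TF\mu_Z)\cdot t=\id$ is $\mu\cdot T\eta=\id$ at~$Z$ pushed through~$TF$; and the remaining split-coequalizer identity $(\mu_{FTZ}\cdot T\lambda_{TZ})\cdot t=s\cdot\mu^{1n}_X$ follows by rewriting $\lambda_{TZ}\cdot FT\eta_Z=TF\eta_Z\cdot\lambda_Z$ (naturality of~$\lambda$) and then $\mu_{FTZ}\cdot TTF\eta_Z=TF\eta_Z\cdot\mu_{FZ}$ (naturality of~$\mu$). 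Finally, that $\mu^{1n}$ coequalizes the two parallel maps is precisely the instance of the associative law of the graded monad with grades $(1,0,n)$. Altogether this makes the image of~\eqref{eq:mu1n} under~$U$ a split coequalizer in~$\cat C$.

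I do not anticipate a real obstacle: every step is forced by the monad unit and associativity laws, the two Kleisli-law equations for~$\lambda$, and naturality; the only mild bookkeeping is in the last split-coequalizer identity, which requires applying naturality of~$\lambda$ and of~$\mu$ in the right order. The one conceptual point worth stating carefully is the descent from~$\cat C$ to $\cat C^T$: one notes that the three objects are $T$-algebras and that $M_1\mu^{0n}$, $\mu^{10}M_n$ and $\mu^{1n}$ are $T$-algebra morphisms (consequences of naturality and of the graded monad laws -- i.e.\ exactly what is implicit in \autoref{P:d1}), and then invokes the standard fact that the monadic forgetful functor $U\colon\cat C^T\to\cat C$ creates coequalizers of $U$-split pairs. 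Since~$n$ was arbitrary, this shows that $(M_n)_{n<\omega}$ is depth-$1$.
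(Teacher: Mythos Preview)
Your proposal is correct and essentially identical to the paper's proof: the same splittings $s=TF\eta F^n$ and $t=TFT\eta F^n$ are used, and the four split-coequalizer identities are verified by the same combination of monad unit laws, naturality, and the Kleisli-law axiom $\lambda\cdot F\eta=\eta F$. The only cosmetic difference is in the final step: the paper observes that $s$ and~$t$ are themselves $M_0$-algebra morphisms (which is immediate from naturality of~$\mu$), so that~\eqref{eq:mu1n} is a split coequalizer \emph{in} $\cat C^T$ directly, whereas you invoke the creation of coequalizers of $U$-split pairs; interestingly, the paper reserves your Beck-style argument for the dual Eilenberg--Moore case, where the analogous splittings need not be algebra morphisms.
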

\begin{proof}
  Recall first that a \emph{split coequalizer} in some category is a
  diagram of the form
  \begin{equation}\label{eq:split}
    \xymatrix{
      A\ar@<3pt>[r]^-f \ar@<-3pt>[r]_-g
      &
      B \ar[r]^-{c}
      \ar@/^1.5pc/[l]^-t
      &
      C
      \ar@/^1pc/[l]^-s
    }
    \quad
    \text{satisfying the equations}\qquad
    \begin{array}{r@{\ }c@{\ }l}
      c\cdot f & = & c \cdot g \\
      c \cdot s & = & \id\\
      f\cdot t & = & \id \\
      g \cdot t & = & s \cdot c
    \end{array}
  \end{equation}
  It then follows that $c$ is indeed an (absolute) coequalizer of $f$
  and $g$.

  We will prove that~\eqref{eq:mu1n} is componentwise a split
  coequalizer in the category of $M_0$-algebras. Note first that here
  we have
  \begin{align*}
    \mu^{1n} &= (
    TFTF^n\xrightarrow{T\lambda F^n} TTF^{n+1} \xrightarrow{\mu
      F^{n+1}} TF^{n+1}) \\
    M_1\mu^{0n} &= (
    TFTTF^n \xrightarrow{TF\mu F^n} TFTF^n) \\
    \mu^{10}M_n &= (
    TFTTF^n \xrightarrow{T\lambda TF^n} TTFTF^n \xrightarrow{\mu
      FTF^n} TFTF^n) \\
  \end{align*}
  Note that all of these are componentwise $M_0$-algebra
  homomorphisms. Furthermore, so are the desired splittings $s$ and
  $t$, which are given by
  \begin{align*}
    TF\eta F^n\colon & TF^{n+1} = TFF^n \to TFTF^n
    \qquad\text{and}\\
    TFT\eta F^n\colon & TFTF^n \to TFTTF^n.
  \end{align*}
  We readily verify the four desired equations above. The first one
  holds by the laws of the graded monad $M_n$. For the second one use
  the unit laws of the distributive law and the monad $T$:
  \[
    \xymatrix{
      TFTF^n \ar[r]^-{T\lambda F^n} & TTF^{n+1} \ar[r]^-{\mu F^{n+1}}
      & TF^{n+1} \ar@{<-} `u[l] `[ll]_-{\mu^{1n}}[ll]\\
      &&
      TFF^n \ar@{=}[u]
      \ar[llu]^-{TF\eta F^n}
      \ar[lu]_-{T\eta FF^n}
    }
  \]
  The third equation again follows from one of the unit laws of the monad
  $T$: just apply $TF$ from the left and $F^n$ from the right on both
  sides of $\mu \cdot T\eta =  \id_T$.
  
  Finally, the last desired equation follows from naturality  
  \[
    \xymatrix@C+1pc{
      TFTF^n
      \ar[r]^-{T\lambda F^n}
      \ar[d]_{TFT\eta F^n}
      &
      TTFF^{n}
      \ar[r]^-{\mu F^{n+1}}
      \ar[d]_{TTF\eta F^n}
      &
      TFF^n
      \ar[d]^{TF\eta F^n}
      \ar@{<-} `u[l] `[ll]_-{\mu^{1n}}[ll]
      \\
      TFTTF^n
      \ar[r]_-{T\lambda TF^n}
      &
      TTFTF^n
      \ar[r]_-{\mu FTF^n}
      &
      TFTF^n
      \ar@{<-} `d[l] `[ll]^-{\mu^{10} M_n} [ll]
    }
  \]  
  This completes the proof.
\end{proof}

A related result concerns distributive laws of a monad over an
endofunctor (so called Eilenberg-Moore laws). Let $(T,\eta, \mu)$ be a monad and
$F$ an endofunctor on $\cat C$ equipped with a distributive law
$\lambda\colon TF \to FT$, i.e.~we have
\[
  \lambda \cdot \eta F = F\eta
  \qquad
  \text{and}
  \qquad
  \lambda \cdot \mu F = F \mu \cdot \lambda T \cdot T \lambda.
\]
Then, as shown in~\cite{MiliusEA15}, we obtain a graded monad
as follows: define $\lambda^n\colon TF^n \to F^nT$ inductively by
$\lambda^0 = \id_T$ and 
\[
  \lambda^{n+1} = (TF^{n+1} = TF^nF \xrightarrow{\lambda^nF} F^n
  TF\xrightarrow{F^n \lambda} F^nF T = F^{n+1} T);
\]
then we obtain a graded monad with $M_n = F^nT$, unit $\eta$ and
multiplication given by
\[
  \mu^{nk} = (F^nTF^kT \xrightarrow{F^n\lambda^k T} F^{n+k}TT
  \xrightarrow{F^{n+1}\mu} F^{n+k}T).
\]
\begin{proposition}
  The above graded monad is depth 1.
\end{proposition}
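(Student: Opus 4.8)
The plan is to proceed exactly as in the proof of \autoref{P:dist}, i.e.\ to show via \autoref{P:d1} that the diagram \eqref{eq:mu1n} is, at every set $X$, a coequalizer in $\Set^{M_0}=\Set^T$, and to do so by exhibiting it as an \emph{absolute} split coequalizer. First I would record the three structure maps involved: $\mu^{1n}=F^{1+n}\mu\cdot F\lambda^nT$, $\mu^{0n}=F^n\mu\cdot\lambda^nT$ (hence $M_1\mu^{0n}=FTF^n\mu\cdot FT\lambda^nT$), and $\mu^{10}=F\mu$ (hence $\mu^{10}M_n=F\mu\,F^nT$). The key structural point — dual to the Kleisli case — is that here the distributive-law twist sits inside $\mu^{0n}$, hence inside the leg $M_1\mu^{0n}$ of the parallel pair, while $\mu^{10}$ is twist-free. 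Consequently the two legs of \eqref{eq:mu1n} will occur in the roles opposite to those in \autoref{P:dist}; this is harmless, since a coequalizer of a pair does not depend on the order of the pair.

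Next I would take the splittings to be $s\coloneqq F\eta F^nT\colon M_{1+n}\to M_1M_n$ and $t\coloneqq F\eta TF^nT\colon M_1M_n\to M_1M_0M_n$, the latter inserting a unit so as to create the \emph{outer} of the two copies of $T$ in $M_1M_0M_n=FTTF^nT$. With these, the four split-coequalizer identities go as follows. The identity $\mu^{1n}\cdot\mu^{10}M_n=\mu^{1n}\cdot M_1\mu^{0n}$ is an instance of the associative law of the graded monad. The identity $\mu^{10}M_n\cdot t=\id$ is immediate from the monad unit law $\mu\cdot\eta T=\id_T$. For $\mu^{1n}\cdot s=\id$ one first proves $\lambda^n\cdot\eta F^n=F^n\eta$ by a short induction on $n$ from the unit axiom $\lambda\cdot\eta F=F\eta$ of the distributive law, and then $\mu^{1n}\cdot s$ collapses to $F^{1+n}(\mu\cdot\eta T)=\id$. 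Finally, $M_1\mu^{0n}\cdot t=s\cdot\mu^{1n}$: using naturality of $\eta$ to commute it past the relevant components of $\lambda^n$ and of $\mu$, both sides reduce at each $X$ to the single map $F(\eta_{M_nX}\cdot\mu^{0n}_X)$. One then checks, just as in the proof of \autoref{P:dist}, that $s$ and $t$ are $M_0$-homomorphisms — with respect to the left $M_0$-module structures on $M_1M_n$ and $M_1M_0M_n$ coming from $\mu^{01}$, and the structure $\mu^{0,1+n}$ on $M_{1+n}$ — using again $\lambda^n\cdot\eta F^n=F^n\eta$ and naturality. Being an absolute (split) coequalizer in $\Set^{M_0}$, \eqref{eq:mu1n} is then a coequalizer there, which by \autoref{P:d1} gives the claim.

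The part I expect to be the main obstacle is guessing the right splittings together with the correct orientation of the parallel pair: one must notice that, unlike in the Kleisli setting, it is the leg $M_1\mu^{0n}$ (not $\mu^{10}M_n$) that has to satisfy $(-)\cdot t=s\cdot\mu^{1n}$, and accordingly that $t$ must create the outer, not the inner, copy of $T$. The only other mildly fiddly point is verifying that $s$ and $t$ are morphisms of $M_0$-algebras, which forces one to be explicit about which $M_0$-algebra structure the objects of \eqref{eq:mu1n} carry; once the splittings are fixed, all remaining steps are short diagram chases from the monad laws, the distributive-law axioms, and naturality.
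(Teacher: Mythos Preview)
Your choice of splittings and your verification of the four split-coequalizer identities are correct and match the paper's. The gap is in the last step: your claim that $s=F\eta F^nT$ and $t=F\eta TF^nT$ are $M_0$-algebra homomorphisms is false, and this is precisely the point at which the Eilenberg--Moore case diverges from the Kleisli case.

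To see this, take $n=0$. The $M_0$-structure on $M_1X=FTX$ is $\mu^{01}_X=F\mu_X\cdot\lambda_{TX}$, and on $M_1M_0X=FTTX$ it is $\mu^{01}_{TX}=F\mu_{TX}\cdot\lambda_{TTX}$. For $s=F\eta_{TX}$ to be a $T$-homomorphism you would need $F\eta_{TX}\cdot F\mu_X\cdot\lambda_{TX}=F\mu_{TX}\cdot\lambda_{TTX}\cdot TF\eta_{TX}$. Using naturality of $\lambda$ and the unit law, the right-hand side collapses to $\lambda_{TX}$, so the requirement becomes $F(\eta_{TX}\cdot\mu_X)=\id_{FTTX}$, i.e.\ $\eta T\cdot\mu=\id_{TT}$, which fails for any nontrivial monad. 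The same obstruction kills $t$. (In the Kleisli case this problem does not arise because there the $M_0$-structures are the \emph{free} $T$-algebra structures $\mu F^n$, and any $Tf$ is automatically a $T$-homomorphism between free algebras.)

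The paper fixes this by weakening the requirement: it shows that \eqref{eq:mu1n} is the coequalizer of a \emph{$U$-split} pair, where $U\colon\Set^{M_0}\to\Set$ is the forgetful functor. Since $U$ is monadic, Beck's theorem guarantees that $U$ creates coequalizers of $U$-split pairs, so a split existing only in $\Set$ already suffices to produce the coequalizer in $\Set^{M_0}$. Your splittings $s,t$ and your four identities are exactly what is needed for this; you just have to replace the (incorrect) ``$s,t$ are $M_0$-homomorphisms'' step by an appeal to Beck's theorem.
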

\begin{proof}
  The proof is similar to the one of \autoref{P:dist}. This
  time we show that~\eqref{eq:mu1n} is componentwise the coequalizer
  of a $U$-split pair, where $U\colon \Set^{M_0} \to \Set$ is the forgetful
  functor, i.e.\ the two splittings $s$ and $t$ are not required to be
  $M_0$-algebra homomorphisms. By Beck's theorem (see
  e.g.~\cite{MacLane98}), $U$ creates coequalizers of $U$-split
  pairs. Thus, it suffices to verify the equations in~\eqref{eq:split}
  in $\Set$. The calculations are completely analogous to what we have
  seen in the proof of \autoref{P:dist} using (the components of)
  \begin{align*}
    F\eta F^nT\colon& F^{n+1} T = FF^n T \to FTF^nT, \\
    F\eta TF^nT\colon& FTF^nT \to FTTF^nT
  \end{align*}
  as $s$ and $t$, respectively. We leave this as an easy exercise for
  the reader.\smnote{I'll add the details when I have time.}
\end{proof}

\subsection{Details for \autoref{sec:ltbt}}

\noindent To prove that a given description~$(M_n)_{n<\omega}$ of the
graded monad generated by a given graded theory~$\GTh$ is correct, we
generally proceed as follows.
\begin{itemize}
\item We identify a notion of \emph{normal form} of terms of a given
  depth~$n$, and show that every depth-$n$ term can be brought into
  this form, by a fixed \emph{normalization} procedure.
\item It will then typically be easy to see that depth-$n$ normal
  forms of terms over variables from a set~$X$ are in bijection with
  the claimed description~$M_nX$; the interpretation of the operations
  of~$\GTh$ over normal forms (by term formation and subsequent
  normalization) then transfers to $M_nX$ along this bijection. (This
  step also determines the multiplication, whose explicit description
  we otherwise mostly elide.)
\item We finally show that under this interpretation of the operations
  of~$\GTh$, the $M_nX$ form a graded algebra for~$\GTh$, i.e.\ satisfy its
  equations. This proves that no two distinct normal forms are
  provably equal, thus showing that $(M_n)_{n<\omega}$ is indeed the
  graded monad generated by~$\GTh$.
\end{itemize}
\noindent The description~$(M_n)_{n<\omega}$ of the graded monad then
usually makes it immediate that the associated graded semantics
captures the process equivalence at hand, in that the graded monad
contains precisely the data in the original semantics. In some cases
we need to introduce modifications of the original semantics that we
prove to induce the same notion of process equivalence, notably for
ready and failure traces and for readiness semantics.

\myparagraph{Details on \hyperref[ssec:completed-trace-semntics]{Completed Trace Semantics}}
Two states $x,y$ in labelled transition systems are \emph{completed
  trace equivalent} if $\CT(x) = \CT(y)$ where $\CT(x)$ is defined as
follows for a state $x$ in an LTS with carrier $X$
\cite{vanglabbeek2001linear}:
\begin{align*}
  \CT(x)& = \big\{
    w\in\Act^\ast \mid \exists z\in X\,.\,x\xrightarrow{w}z
    \big\}\\
  & \cup \big\{
    w\star\in\Act^\ast\times\{\star\}\mid\exists z\in X\,.\,x\overset w\to z\wedge
    I(z)=\varnothing
    \big\}.
\end{align*}
An element of $\CT(x)$ is thus either a trace or a \emph{complete
  trace}, the latter being recognizable by the marker~$\star$.

Normal forms in the graded theory for completed trace semantics as defined in
\autoref{sec:ltbt} are described as follows: A depth-$0$ term is
normalized by just collapsing nested joins into a set, identifying the
term with the set of elements that occur in it (in the sequel, we will
mostly keep normalization of joins into sets implicit). At depth
$n+1$, actions are distributed over the joins of depth-$n$ terms
(normalized by induction to contain joins only at the top level). We
end up with normal forms of depth-$n+1$ terms being joins of terms
consisting either of~$n+1$ unary operations of the form~$\sigma$ for
$\sigma\in\Act$, applied to a variable, or of at most~$n$ operations
of the form~$\sigma$, applied to the constant~$\star$; terms of the
latter kind represent complete traces, those of the former kind
represent standard traces.

Clearly, these normal forms are in bijection with our claimed
description
$M_nX=\Pfin(\Act^n\times
X+\Act^{<n}\times\{\star\})\cong\Pfin(\Act^n\times X+\Act^{<n})$.
The arising interpretation of the operations of the theory is as
follows: Joins are interpreted by set union; operations~$\sigma$ for
$\sigma\in\Act$ by prefixing all words in a set with~$\sigma$; and
the constant~$\star$ by the set~$\{\star\}$. In our standard programme
outlined above, it remains to prove that this graded algebraic
structure on the $M_nX$ satisfies the equations of the graded
theory. The join equations are inherited from powerset, and
distributivity of the unary operations~$\sigma$ over joins is clear by
the description of the interpretation of~$\sigma$ just given. This
proves that the graded monad generated by the graded theory of
complete traces is indeed
$M_nX=\Pfin(\Act^n\times X+\Act^{<n}\times\{\star\})$, and thus
represents complete traces. Formally, it remains to be shown that the
graded semantics induced by~$\alpha$ as described in
\autoref{sec:ltbt} does indeed compute the completed trace
semantics of a state (and not some other set of traces and complete
traces). We carry this proof out explicitly for the case at hand and
elide it in the sequel.

\begin{proposition}
  \label{prop:completed-trace-semantics-alpha}
  The $\nth$ stage $M_n!\cdot \gamma^{(n)}(x)$ of the $\alpha$-trace
  semantics of a state~$x$ in an LTS $(X,\gamma)$ contains exactly the
  length-$n$ traces of~$x$ and the complete traces of~$x$ of length
  less than~$n$.
\end{proposition}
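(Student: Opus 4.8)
The plan is to prove, by induction on~$n$, a strengthened statement that also keeps track of poststates: for every state~$x$ in the LTS $(X,\gamma)$,
\[
  \gamma^{(n)}(x)=\{(w,z)\mid w\in\Act^n,\ x\xrightarrow{w}z\}\ \cup\ \{w\in\Act^{<n}\mid \exists z\,.\,x\xrightarrow{w}z \text{ and }I(z)=\emptyset\}
\]
as an element of $M_nX=\Pfin(\Act^n\times X+\Act^{<n})$; that is, $\gamma^{(n)}(x)$ consists of the length-$n$ pretraces of~$x$ together with the complete traces of~$x$ of length below~$n$. Granting this, the proposition follows immediately: the map $M_n!$ erases the poststate from each pair in $\Act^n\times X$ and acts as the identity on $\Act^{<n}$, so that $M_n!\cdot\gamma^{(n)}(x)=\{w\in\Act^n\mid\exists z\,.\,x\xrightarrow{w}z\}\cup\{w\in\Act^{<n}\mid\exists z\,.\,x\xrightarrow{w}z,\ I(z)=\emptyset\}$, which is precisely the set of length-$n$ traces of~$x$ and complete traces of~$x$ of length less than~$n$.

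The base case $n=0$ is immediate, since $\gamma^{(0)}=\unit_X$ and hence $\gamma^{(0)}(x)=\{x\}$, matching the claim because $\Act^0=\{\epsilon\}$ and $\Act^{<0}=\emptyset$. For the induction step we unfold $\gamma^{(n+1)}=\mult^{1n}_X\cdot M_1\gamma^{(n)}\cdot\alpha_X\cdot\gamma$, for which we need the concrete form of $\mult^{1n}\colon M_1M_nX\to M_{n+1}X$ read off from the normal forms (``prepend a letter''): it sends $U\in\Pfin(\Act\times M_nX+1)$ to
\[
  \textstyle\bigcup_{(\sigma,V)\in U}\big(\{(\sigma w,z)\mid (w,z)\in V\}\cup\{\sigma v\mid v\in V\cap\Act^{<n}\}\big),
\]
augmented by the empty word~$\epsilon$ in case $\star\in U$. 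Now distinguish two cases. If $I(x)=\emptyset$, then $\alpha_X(\gamma(x))=\{\star\}$, which $M_1\gamma^{(n)}$ leaves unchanged and $\mult^{1n}_X$ then sends to $\{\epsilon\}$; this matches the claim for~$n+1$, as~$x$ has no pretrace of length~$n+1$ and $\epsilon$ is its only complete trace of length below~$n+1$. If $I(x)\neq\emptyset$, then $\alpha_X(\gamma(x))=\gamma(x)$, $M_1\gamma^{(n)}$ turns this into $\{(\sigma,\gamma^{(n)}(y))\mid x\xrightarrow{\sigma}y\}$, and applying $\mult^{1n}_X$ together with the induction hypothesis for the successors~$y$ yields exactly $\{(\sigma w,z)\mid x\xrightarrow{\sigma}y\xrightarrow{w}z,\ w\in\Act^n\}\cup\{\sigma v\mid x\xrightarrow{\sigma}y,\ v\in\Act^{<n},\ \exists z\,.\,y\xrightarrow{v}z,\ I(z)=\emptyset\}$; since $x\xrightarrow{\sigma w}z\iff\exists y\,.\,x\xrightarrow{\sigma}y\xrightarrow{w}z$ and $\Act\cdot\Act^{<n}=\{w\mid 1\le|w|\le n\}$, and since~$x$ is itself not a deadlock, this is precisely the set claimed for~$n+1$.

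The only slightly delicate point is the bookkeeping around the deadlock marker~$\star$: one has to verify that under $\mult^{1n}$ the element~$\star$ contributes precisely the empty complete trace $\epsilon\in\Act^{<n+1}$ and nothing else, and that in the non-deadlock case the complete traces generated range exactly over lengths $1,\dots,n$, thereby filling precisely the gap left by the absent length-$0$ complete trace (which is present only when~$x$ itself deadlocks). Everything else is a routine unfolding of the definitions of $\gamma^{(-)}$, of $\alpha$, and of the graded multiplication, together with the standard path decomposition $x\xrightarrow{\sigma w}z\iff\exists y\,.\,x\xrightarrow{\sigma}y\xrightarrow{w}z$.
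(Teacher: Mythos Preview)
Your proof is correct and follows essentially the same approach as the paper: strengthen the claim to one about $\gamma^{(n)}(x)$ (length-$n$ pretraces together with complete traces of length below~$n$), proceed by induction on~$n$, and in the step case split on whether~$x$ is a deadlock. The only cosmetic difference is that you spell out the action of~$\mult^{1n}$ set-theoretically (prefixing by~$\sigma$, and $\star\mapsto\epsilon$), whereas the paper phrases the same step algebraically (``$\star$ is a constant, hence left unchanged by substitution'').
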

(Consequently, $\alpha$-trace equivalence coincides with completed
trace equivalence.)
\begin{proof}
  We show the stronger statement that $\gamma^{(n)}(x)$ contains
  exactly the length-$n$ pretraces of~$x$ and the complete traces
  of~$x$ of length less than~$n$. We proceed by induction
  over~$n$. For $n=0$,~ $x$ has exactly one length-$n$ pretrace,
  namely $(\epsilon,x)$ (and, of course, no complete trace of length
  less than~$n$). On the other hand, we also have
  $\gamma^{(0)} (x) = \unit_X(x) = \{(\epsilon,x)\}$.
  In the inductive step from~$n$ to $n+1$, if $\gamma(x)=\varnothing$,
  then $x$ has no traces of depth $n+1$ and exactly one complete
  trace of depth $<n$, written $\star$. On the other hand,
  $\gamma^{n+1}(x)=\mu^{1n}\cdot
  M_1\gamma^{(n)}\cdot\alpha_X\cdot\gamma(x)$
  arises, in the algebraic view, by substituting into the term~$\star$
  that represents
  $\alpha\cdot\gamma(x) = \alpha(\varnothing) = \{\star\}$;
  since~$\star$ is a constant, it is left unchanged by substitutions,
  so that $\gamma^{(n+1)}(x)=\{\star\}$ as required. If
  $\gamma(x) \neq\varnothing$, then
  $\alpha(\gamma(x)) = \Pfin\inl(\gamma(x))$ is represented in
  algebraic notation as
  \begin{equation*}
    \alpha_X(\gamma(x)) = \sum_{(\sigma,x)\in\gamma(x)}\sigma(x).
  \end{equation*}
  By the description of the interpretation of the algebraic operations
  in~$\BBM$, we thus have
  \begin{align*}
    \gamma^{(n+1)}(x) & = \mu^{1n}\cdot M_1\gamma^{(n)}\cdot\alpha_X\cdot\gamma(x)\\
                      & = 
                        \big\{(\sigma w,z) \mid (\sigma,y)\in\gamma(x), (w,z)\in
                        \gamma^{(n)}(y)
                        \big\},
  \end{align*}
  whence the claim is immediate by the inductive hypothesis.
\end{proof}

\myparagraph{Details on \hyperref[ssec:readiness_failures]{Readiness and
Failures Semantics}}

The set of \emph{ready pairs} of a state $x$ of an LTS with carrier
$X$ is defined as
  \[
     \R(x)=\big\{ ( w,A)\in\Act^\ast\times\Pfin\Act\mid\exists z\in
     X\,.\,x\xrightarrow{w} z\wedge A=I(z) \big\},
  \]
and the set of \emph{failure pairs} of a state $x$ of an LTS with carrier $X$ is defined as
\[
  \F(x) = \big\{
    (w,A) \in \Act^\ast\times\Pfin\Act\mid\exists z\in X\,.\,x\xrightarrow{w}z
    \wedge A\cap I(z) = \emptyset
    \big\}.
\]
Since failures and readiness semantics are both finer than trace
semantics, i.e.\ two states that are readiness (failures) equivalent
are also trace equivalent~\cite{vanglabbeek2001linear}, we can add
traces $\T(x)$ to the sets $\R(x)$ and $\F(x)$ while still capturing
readiness (failures) equivalence; that is
$\R(x) = \R(y) \Leftrightarrow \R(x)\cup\T(x) = \R(y) \cup \T(y)$ in
case of readiness and analogously for failures.

We claim that the graded monad induced by the given graded theory of readiness has the form
\begin{align*}
  M_0X &= \Pfin X \\
  M_nX &= \Pfin(\Act^n\times X + \Act^{<n}\times\Pfin\Act)
\end{align*}
and the graded monad induced by the graded theory of failures
\begin{align*}
  M_0X &= \Pfin X \\
  M_nX &= \Down(\Act^n\times X + \Act^{<n}\times\Pfin\Act).
\end{align*}
The following description of normal forms will show that the graded equivalences induced by
these theories are exactly readiness and failures equivalence: $M_n1$ contains traces of length
$n$ as well as ready (failure) pairs of length less than $n$, i.e.\ the trace of the ready
(failure) pair is of length less than $n$. Normal forms over variables in $X$ are as follows.
Depth-$0$ terms are normalized by collapsing nested joins. Depth-$n+1$ terms are normalized by
distributing actions $\sigma$ over the top-level joins of (by induction, normalized) depth-$n$
terms. We end up with a join of depth-$n+1$ pretraces and up to depth-$n$ ready (failure)
pairs. In case of failures we also impose the order induced by the monotonicity on failure
sets, (co-)product ordering as implied by the signature of $M_n$, equality on $\Act^k$ and
inclusion order on $\Pfin\Act$, and normalize the terms by forming the downclosure accordingly.
Thus, the above descriptions of the graded monads are in bijection with these normal forms and
the $M_nX$ form graded algebras as the join semilattice equations, and in the case of failures
the monotonicity equations, are satisfied.

\myparagraph{Details on \hyperref[ssec:ready-failure-trace]{Ready Trace and
Failure Trace Semantics}}

The set $\RT(x)$ of ready traces in normal form of a state $x$ of an
LTS with carrier $X$ is defined as
  \begin{align*}
    \RT(x)=\bigl\{&
    A_0\sigma_1A_1\ldots \sigma_n A_n\in(\Pfin\Act\times\Act)^\ast\times\Pfin\Act\mid\\
    &\exists x_0,\dots,x_n\in X.\,x=x_0\xrightarrow{\sigma_1} x_1
    \xrightarrow{\sigma_2} \cdots \xrightarrow{\sigma_n} x_n\wedge \forall i\leq n\,.\,A_i=I(x_i)
    \bigr\}.
  \end{align*}
States $x,y$ are ready trace equivalent if $\RT(x) =\RT(y)$.

As indicated above, we work with an equivalent variant of this
semantics:
\begin{lemma}\label{lem:ready-traces}
  Ready trace equivalence coincides with the equivalence defined by
  assigning to a state $x$ of an LTS with carrier $X$ the set
  \begin{align*}
    \RT'(x) = \big\{&
      A_1\sigma_1\ldots A_n\sigma_n\in (\Pfin\Act\times\Act)^\ast \mid
      \exists x_0,\dots,x_n\in X.\,
      \\
      & x=x_0\xrightarrow{\sigma_1} x_1\xrightarrow{\sigma_2}\cdots\xrightarrow{\sigma_{n-1}} x_{n-1}\xrightarrow{\sigma_{n}} x_n
      \wedge  \forall i< n.\,A_i = I(x_i))
      \big\}\; \cup \\
      \big\{
        &A_1\sigma_1\ldots A_n\sigma_n\star\in (\Pfin\Act\times\Act)^\ast\times 1 \mid
        \exists x_0,\dots,x_n\in X.\,
        \\
        & x=x_0\xrightarrow{\sigma_1} x_1\xrightarrow{\sigma_2}\cdots
        \xrightarrow{\sigma_{n-1}} x_{n-1}\xrightarrow{\sigma_{n}} x_n
        \wedge\forall i< n.\, I(x_i)=A_i\,\wedge \\
        & I(x_n) = \emptyset
      \big\}.
  \end{align*}
\end{lemma}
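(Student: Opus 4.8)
The plan is to show that $\RT(x)$ and $\RT'(x)$ determine one another by means of translations that are independent of the ambient LTS, so that $\RT(x)=\RT(y)$ and $\RT'(x)=\RT'(y)$ become equivalent conditions. Recall the only two differences between the semantics: an element $A_0\sigma_1A_1\cdots\sigma_nA_n$ of $\RT(x)$ records the ready set $A_n=I(x_n)$ of the final state of its witnessing path, whereas the corresponding element of $\RT'(x)$ deletes this trailing ready set and instead records, via the marker $\star$, only whether $I(x_n)=\emptyset$. Thus $\RT'$ is exactly the completed-trace analogue of ready trace semantics over the ``alphabet'' $\Pfin\Act\times\Act$ (compare \autoref{ssec:completed-trace-semntics}), which is also why it fits the graded-monad description used in \autoref{sec:ltbt}.

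First I would check that $\RT'(x)$ is a function of $\RT(x)$. For $w=A_0\sigma_1A_1\cdots\sigma_nA_n\in\RT(x)$ write $w^-=A_0\sigma_1A_1\cdots A_{n-1}\sigma_n$ for the word obtained by erasing the last ready set (up to the obvious reindexing, $w^-$ has the shape of the elements of $\RT'$ displayed in the lemma). A witnessing path for $w$ is at once a witnessing path showing $w^-\in\RT'(x)$, and, when $A_n=\emptyset$, also $w^-\star\in\RT'(x)$; conversely, a witnessing path for an element of $\RT'(x)$, with the ready set of its last state filled back in, witnesses an element $w\in\RT(x)$ whose image is the given element (resp.\ $w^-\star$ when that last ready set is empty). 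Hence $\RT'(x)=\{\,w^-\mid w\in\RT(x)\,\}\cup\{\,w^-\star\mid w\in\RT(x),\ w$ ends in $\emptyset\,\}$, which yields one implication.

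For the converse I would show, level by level, that $\RT(x)$ is recovered from $\RT'(x)$: for $w=A_0\sigma_1A_1\cdots\sigma_nA_n$ one has $w\in\RT(x)$ if and only if either $A_n=\emptyset$ and $w^-\star\in\RT'(x)$, or $A_n\neq\emptyset$ and $w^-A_n\sigma\in\RT'(x)$ for some $\sigma\in\Act$. The key observation is that $\RT'$ stores the \emph{exact} ready sets $I(x_i)$ along a path, so a witnessing path for $w^-A_n\sigma\in\RT'(x)$ necessarily has $I(x_n)=A_n$ and, truncated before its last transition, witnesses $w\in\RT(x)$; conversely, when $A_n=I(x_n)\neq\emptyset$ one simply extends a witnessing path for $w$ by any action in $A_n$. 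The deadlock case $A_n=\emptyset$ is precisely where the marker $\star$ is indispensable, since a deadlocked final state admits no extension that could be read off; the base case $n=0$ (single ready sets $A_0=I(x)$) is handled the same way, using $\star\in\RT'(x)\iff I(x)=\emptyset$ and $A_0\sigma\in\RT'(x)\iff A_0=I(x)$ (with then necessarily $\sigma\in A_0$).

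The verification is elementary; the only points demanding care are the index bookkeeping and the treatment of $\star$ — on the one hand the marker records information genuinely absent from the extensions of a trace, and on the other hand the non-empty ready sets really are recoverable from those extensions, precisely because $\RT'$ records them exactly rather than up to inclusion. Combining the two implications gives $\RT(x)=\RT(y)\iff\RT'(x)=\RT'(y)$, which is the assertion of the lemma.
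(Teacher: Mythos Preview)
Your proposal is correct and follows essentially the same approach as the paper: both directions hinge on the same observations, namely that $\RT'(x)$ is obtained from $\RT(x)$ by dropping the terminal ready set (marking deadlocks with~$\star$), and that conversely a ready trace $wA$ is recovered from $\RT'(x)$ via the case split $A=\emptyset\Rightarrow w\star\in\RT'(x)$ versus $A\neq\emptyset\Rightarrow wAa\in\RT'(x)$ for some $a\in A$. Your write-up is more explicit about the biconditional and the base case, but the argument is the same.
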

\begin{proof}
  All the data in $\RT'(x)$ can clearly be calculated
  from~$\RT(x)$. Conversely, given $\RT'(x)$, a ready trace of~$x$ has
  the form $wA$ where $w\in\RT'(x)$. If $A=\emptyset$, then
  $w\star\in\RT'(x)$, so the information about~$wA$ is contained
  in~$\RT'(x)$. Otherwise pick $a\in A$; then $wAa\in\RT'(x)$, so
  again the information about $wA$ is in $\RT'(x)$. This proves the
  claim.
\end{proof}

\noindent The set $\FT(x)$ of failure traces in normal form of a state
$x$ of an LTS with carrier~$X$ is defined as
\begin{align*}
  \FT(x)=\big\{& A_0\sigma_1A_1\ldots \sigma_n A_n\in(\Pfin\Act\times\Act)^\ast
  \times\Pfin\Act\mid\\
  &\exists x_0,\dots,x_n\in
  X.\,x=x_0\xrightarrow{\sigma_1}x_1\xrightarrow{\sigma_2}\cdots\xrightarrow{\sigma_n}x_n\,
  \wedge \\
  &\forall i\leq n.\,A_i\cap I(x_i)=\varnothing\big\}
\end{align*}
Two states $x,y$ are failure trace equivalent if $\FT(x) =\FT(y)$.

\begin{lemma}\label{lem:failure-traces}
  Failure trace equivalence coincides with the equivalence defined by
  assigning to a state $x$ of an LTS with carrier $X$ the set
  \begin{align*}
    \FT'(x) =
      \big\{&
      A_1\sigma_1\ldots A_n\sigma_n\in (\Pfin\Act\times\Act)^\ast \mid
      \exists x_0,\dots,x_n\in X.
      \\
      & x=x_0\xrightarrow{\sigma_1} x_1\ldots x_{n-1}\xrightarrow{\sigma_{n}} x_n
      \wedge \forall i< n.\,A_i \cap I(x_i)) =\emptyset
      \big\} \;\cup \\
      \big\{
        &A_1\sigma_1\ldots A_n\sigma_n\star\in (\Pfin\Act\times\Act)^\ast\times 1 \mid
        \exists x_0,\dots,x_n\in X.
        \\
        & x=x_0\xrightarrow{\sigma_1} x_1\ldots x_{n-1} \xrightarrow{\sigma_{n-1}} x_n
        \;\wedge\\
            & \forall i<n.\, I(x_i)) \cap A_i =\emptyset \wedge I(z) = \emptyset
      \big\}
  \end{align*}
\end{lemma}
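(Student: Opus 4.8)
The plan is to mimic the two-step argument used for ready traces in the proof of \autoref{lem:ready-traces}; the only genuinely new ingredient is that failure sets are downward closed rather than exact. Concretely, I would show (i) that $\FT'(x)$ is obtained from $\FT(x)$ by a fixed recipe, so that $\FT(x)=\FT(y)$ already forces $\FT'(x)=\FT'(y)$, and (ii) conversely that $\FT(x)$ can be recovered from $\FT'(x)$.

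For (i): a first-type element $A_1\sigma_1\cdots A_n\sigma_n$ of $\FT'(x)$ is exactly (the word read off from) a path of~$x$ together with the failure sets at all of its states except the last; padding it on the right with the failure set~$\emptyset$ (which is a failure set at every state) turns it into an element of $\FT(x)$, and conversely membership in $\FT'(x)$ is equivalent to membership of this padded word in $\FT(x)$. A $\star$-marked element $A_1\sigma_1\cdots A_n\sigma_n\star$ lies in $\FT'(x)$ precisely when the final state~$x_n$ of the corresponding path satisfies $I(x_n)=\emptyset$, and this is visible from $\FT(x)$, since it holds iff the word obtained by appending the failure set~$\Act$ lies in $\FT(x)$. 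Hence $\FT'(x)$ is a function of $\FT(x)$.

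For (ii): every failure trace of~$x$ has the shape $wA$, where $w$ is the prefix ending in an action and $A\in\Pfin\Act$; such a~$w$ lies in $\FT'(x)$ iff $x$ admits the corresponding path, ending in some state~$x_n$. It then remains to read off from $\FT'(x)$ which sets~$A$ are failure sets at~$x_n$, i.e.\ satisfy $A\cap I(x_n)=\emptyset$. The key point is that $I(x_n)$ itself is recoverable: $a\in I(x_n)$ iff the one-step extension of~$w$ by the pair $(\emptyset,a)$ lies in $\FT'(x)$, because the empty set is a failure set everywhere and such an extension exists precisely when~$a$ is enabled at~$x_n$ (equivalently, $I(x_n)=\emptyset$ iff $w\star\in\FT'(x)$). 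Therefore the family of failure sets at~$x_n$ --- namely all subsets of $\Act\setminus I(x_n)$ --- is determined by $\FT'(x)$, and hence so is membership of $wA$ in $\FT(x)$.

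I expect the only real subtlety to be exactly this downward-closure step: in the ready-trace argument a recorded set~$A$ had to equal $I(x_n)$ and was therefore pinned down by a single witnessing extension $wAa$ with $a\in A$, whereas here several sets~$A$ are valid simultaneously, so one must instead reconstruct $I(x_n)$ from the actions enabled at~$x_n$ and then close downward. The remaining points --- that the empty word / $n=0$ case behaves as expected, and that the index at which the $\star$ sits matches the conventions used for $\FT$ --- are routine bookkeeping.
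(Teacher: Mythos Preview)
Your direction (i) is fine and matches the paper. Direction (ii), however, has a genuine gap: the claim that $I(x_n)$ can be recovered from $\FT'(x)$ via extensions $w\emptyset a$ fails under non-determinism. The word~$w$ may be witnessed by several paths ending in different states, and $w\emptyset a\in\FT'(x)$ only tells you that~$a$ is enabled at \emph{some} such endpoint, not at the particular~$x_n$ you fixed. Concretely, take $\Act=\{a,b\}$ and $x\xrightarrow{a} y_1$, $x\xrightarrow{a} y_2$ with $I(y_1)=\{a\}$, $I(y_2)=\{b\}$, and let $w=\emptyset\,a$. Your recipe recovers $\{a,b\}$ as ``the'' ready set after~$w$ and would conclude that $\{a\}$ cannot occur as a final failure set; but $w\{a\}\in\FT(x)$ via~$y_2$. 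So reconstructing a single $I(x_n)$ and then closing downward gives the wrong answer.

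The paper's argument avoids this by not separating the extension from the failure set under test: given $wA\in\FT(x)$, one fixes a witness~$z$ with $A\cap I(z)=\emptyset$; if $I(z)=\emptyset$ then $w\star\in\FT'(x)$, and otherwise one picks $a\in I(z)$ and obtains $wAa\in\FT'(x)$. The crucial point is that the extension carries~$A$ itself (not~$\emptyset$) as the failure set recorded at~$z$, so membership of $wAa$ in $\FT'(x)$ directly yields $wA\in\FT(x)$ regardless of which endpoint any other witnessing path might use. In short, extend by $(A,a)$ with $a\in I(z)$, not by $(\emptyset,a)$; the former localises the test for~$A$ at the correct state, the latter discards it and then cannot recover it.
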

\begin{proof}
  All data in $\FT'(x)$ can clearly be calculated from $\FT(x)$
  (noting that a state $z$ is a deadlock iff
  $\Act\cap I(z)=\emptyset$). Conversely, given $\FT'(x)$ a failure
  trace in $\FT(x)$ has the form $wA$ where $w\in\FT'(x)$, and then
  there exists~$z$ such that $x\xrightarrow{w}z$ and
  $I(z)\cap A=\emptyset$ (where $x\xrightarrow{B}x$ whenever
  $I(x)\cap B=\emptyset$ for $B\subseteq\Act$). If~$I(z)=\emptyset$,
  then $w\star\in\FT'(x)$, and from this knowledge $wA\in\FT(x)$ is
  deducible. Otherwise, pick $a\in I(z)$. Then $wAa\in\FT'(x)$, from
  which knowledge $wA\in\FT(x)$ is deducible. This proves the claim.
\end{proof}

\begin{proposition}
  States in labelled transition systems are ready (failure) trace
  equivalent iff they have the same graded trace sequence for the given
  graded monads and graded trace semantics.
\end{proposition}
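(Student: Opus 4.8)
The plan is to reduce both cases to the completed-trace computation already carried out in \autoref{prop:completed-trace-semantics-alpha}, exploiting the facts recorded above that the graded monad and graded semantics for ready traces are literally those for completed traces over the enlarged action set $\Act'=\Pfin\Act\times\Act$, while the ones for failure traces differ only by imposing downward closure on failure sets. By \autoref{lem:ready-traces} and \autoref{lem:failure-traces} it suffices to show that two states of an LTS agree on their graded trace sequences iff they agree on the set $\RT'$, respectively $\FT'$. For this it is in turn enough to show that $M_n!\cdot\gamma^{(n)}(x)$ consists exactly of the unmarked elements of $\RT'(x)$ of length~$n$ together with the $\star$-marked ones of length less than~$n$ (and, in the failure case, of the downward closure of the analogous set built from $\FT'(x)$); for then the sequence $(M_n!\cdot\gamma^{(n)}(x))_{n<\omega}$ and the set $\RT'(x)$, resp.\ $\FT'(x)$, determine each other.

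For ready traces, the key observation is that $\alpha_X\cdot\gamma$ factors as $\beta_X\cdot\gamma'$, where $\beta$ is the completed-trace natural transformation over $\Act'$ (i.e.\ $\beta_X(\emptyset)=\{\star\}$, and $\beta_X(S)=S$ otherwise) and $\gamma'\colon X\to\Pfin(\Act'\times X)$ is the \emph{decorated} coalgebra $\gamma'(x)=\{((I(x),\sigma),y)\mid(\sigma,y)\in\gamma(x)\}$, which tags every transition with the ready set of its source and satisfies $\gamma'(x)=\emptyset$ precisely when $x$ is a deadlock. Since the ready-trace graded monad, together with its unit and multiplication, \emph{is} the one for completed traces over $\Act'$, and since both pretrace sequences start at $\unit_X$, a straightforward induction on $n$ using the recursive clause defining the pretrace sequence shows that the $\alpha$-pretrace sequence $(\gamma^{(n)})_{n<\omega}$ of $(X,\gamma)$ equals the completed-trace pretrace sequence of $(X,\gamma')$ over $\Act'$. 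Applying \autoref{prop:completed-trace-semantics-alpha} with $\Act$ replaced by $\Act'$ then gives that $M_n!\cdot\gamma^{(n)}(x)$ comprises the length-$n$ traces of $x$ in $(X,\gamma')$ and the complete traces of $x$ in $(X,\gamma')$ of length less than~$n$; unfolding the definition of $\gamma'$ — whose length-$m$ traces are exactly the words $(I(x_0),\sigma_1)\cdots(I(x_{m-1}),\sigma_m)$ read off from $\gamma$-paths $x=x_0\xrightarrow{\sigma_1}\cdots\xrightarrow{\sigma_m}x_m$ — and noting that $\gamma'$-deadlocks are precisely $\gamma$-deadlocks, this is exactly the set described in the first paragraph, so the ready-trace case is done.

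The failure-trace case is entirely parallel, with $\Act'$ as before but $\Pfin$ replaced throughout by the downclosed finite powerset $\Down$, and with the decorated coalgebra now $\gamma'(x)=\{((A,\sigma),y)\mid(\sigma,y)\in\gamma(x),\ A\cap I(x)=\emptyset\}$, which tags each transition with an arbitrary failure set of its source; one checks that $\gamma'(x)$ is downward closed in the failure-set component, is empty precisely at deadlocks, and that $\alpha_X\cdot\gamma$ factors as $\gamma'$ followed by the evident `downclosed completed-trace' natural transformation over $\Act'$ (sending $\emptyset$ to $\{\star\}$ and acting as the identity elsewhere). The same induction identifies the $\alpha$-pretrace sequence of $(X,\gamma)$ with the pretrace sequence of $(X,\gamma')$ for the failure-trace graded monad, and a downclosure-aware analogue of \autoref{prop:completed-trace-semantics-alpha}, proved by the very same inductive argument but additionally carrying the downward closure through the depth-$1$ steps, shows that $M_n!\cdot\gamma^{(n)}(x)$ is the downward closure of the $\FT'$-analogue of that set. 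Together with \autoref{lem:ready-traces} and \autoref{lem:failure-traces} this proves the proposition.

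The only genuinely delicate point is verifying that the factorization $\alpha_X\cdot\gamma=\beta_X\cdot\gamma'$, and its downclosed variant, really propagates through the inductively defined pretrace sequence: this hinges on the ready/failure-trace graded monad being not merely isomorphic to but \emph{on the nose} the completed-trace monad over $\Act'$ (modulo replacing $\Pfin$ by $\Down$), so that the multiplications composed on the two sides coincide exactly, together with, in the failure case, the bookkeeping needed to keep track of downward closure along the depth-$1$ steps. Everything else is a routine unfolding of definitions and an appeal to the already-established \autoref{prop:completed-trace-semantics-alpha}.
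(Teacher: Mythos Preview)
Your proof is correct and takes essentially the same route as the paper, which likewise reduces both cases to the completed-trace computation over the enlarged action set $\Pfin\Act\times\Act$ (tracking downclosure in the failure case) via \autoref{lem:ready-traces} and \autoref{lem:failure-traces}, and then appeals to the same induction as for \autoref{prop:completed-trace-semantics-alpha}. Your explicit factorization $\alpha_X\cdot\gamma=\beta_X\cdot\gamma'$ through a decorated coalgebra simply makes precise what the paper indicates when it says the ready-trace case is ``shown in exactly the same way as for complete traces'' and that for failure traces ``details are like for complete traces''.
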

\begin{proof}
  For ready trace semantics, the claim is, by
  \autoref{lem:ready-traces}, shown in exactly the same way as for
  complete traces.

  For failure traces, one similarly uses the alternative failure trace
  semantics $\FT'$ of \autoref{lem:failure-traces}. We note that the
  graded monad induced by the graded theory defined in
  \autoref{sec:ltbt} is described as follows. We order $\Act$ by
  equality and $\Pfin\Act$ by set inclusion, and equip
  $(\Pfin\Act\times\Act)^n$ with the product ordering; on
  $(\Pfin\Act\times\Act)^{<n}=\sum_{i<n}(\Pfin\Act\times\Act)^i$, we use
  the coproduct ordering (in particular, words are comparable only if
  they have the same length). Then, $M_nX$ consists of the downclosed
  subsets of
  $(\Pfin\Act\times\Act)^n\times X+(\Pfin\Act\times\Act)^{<n}$.  One
  shows by induction over~$n$ that
  $A_1\sigma_1\dots A_n\sigma_n\in\FT'(x)$ iff
  $(A_1\sigma_1\dots A_n\sigma_n,\bullet)\in M_n!\cdot\gamma^{(n)}(x)$
  where now~$\bullet$ denotes the unique element of~$1$ (in this case
  indicating absence of information rather than deadlock), and, for
  $i<n$, $A_1\sigma_1\dots A_i\sigma_i\star\in\FT'(x)$ iff
  $A_1\sigma_1\dots A_i\sigma_i\in M_n!\cdot\gamma^{(n)}(x)$; again,
  details are like for complete traces. This proves the claim.
\end{proof}

\myparagraph{Details on \hyperref[ssec:simulation]{Simulation Equivalence}}

The description of the graded monad is seen as follows.  We define a
normal form for depth-$n$ terms inductively. The description of $M_0$
is standard, and arises by collapsing nested joins into sets. A
depth-$(n+1)$ term normalizes, in the same way, to a set of elements
of $\Act\times M_nX$ (with depth-$n$ terms normalized to elements of
$M_nX$ by induction). Monotonicity of the $\sigma\in\Act$ implies that
we can normalize such sets to be downclosed under the product ordering
on $\Act\times M_nX$; indeed, given a term
$\bigvee_{i \in I} \sigma_i(x_i)$ in $M_{n+1}X$ (corresponding to the
set $\{(\sigma_i, x_i) \mid i \in I\}$, we first expand the join by
adding for every $i \in I$ a copy of the summand $\sigma_i(x_i)$ for
every $y \leq x_i$, and then use monotonicity and the (infinitary)
congruence rule in the middle step below:
  \[
    \bigvee_{i \in I} \sigma_i(x_i) = \bigvee_{i \in I}\bigvee_{y \leq
      x_i} \sigma(x_i) = \bigvee_{i\in I}\bigvee_{y \leq x_i}
    \sigma_i(y) 
    = \bigvee_{i \in I, y \leq x_i} \sigma_i(y),
  \]
  which corresponds to a downward closed subset of $\Act \times
  M_nX$. 

  We arrive at normal forms that are in bijection with the claimed
  description of $M_{n+1}X$. The induced interpretations of the
  operations of the graded theory are as follows: Unary
  operations~$\sigma$ for $\sigma\in\Act$ map~$t\in M_nX$ to the
  downclosure of $\{(\sigma,t)\}$; and joins act as set unions. We are
  done once we show that under this interpretation, the $M_{n}X$ form
  a graded algebra, i.e.\ satisfy the equations of the graded
  theory. This is clear for the complete join semilattice equations;
  the monotonicity equation is ensured by the formation of
  downclosures in the interpretation of $\sigma\in\Act$.

For characterization of simulation by the graded monad, we prove the
stronger claim that given states $x$ and $y$ in $F$-coalgebras
$(X,\gamma)$ and $(Y,\delta)$, respectively,~$y$ simulates~$x$ up to
depth~$n$ iff $M_i!\cdot\gamma^{(i)}(x)\le M_i!\cdot\delta^{(i)}(y)$
for all $i\le n$ (a condition that is automatic for $i=0$). We proceed
by induction over~$n$, with trivial base case $n=0$. For the inductive
step, we note that~$y$ simulates~$x$ up to depth~$n+1$ iff for each
$(\sigma,x')\in\gamma(x)$ there exists $(\sigma,y')\in\delta(y)$ such
that~$y'$ simulates~$x'$ up to depth~$n$, and by induction, the latter
is equivalent to
$M_i!\cdot \gamma^{(i)}(x')\le M_i!\cdot \delta^{(i)}(y')$ for all
$i\le n$.  Thus, $y$ simulates~$x$ up to depth~$n$ iff for each
$i\le n$, the set
$\{(\sigma,M_{i}!\cdot \gamma^{(i)}(x'))\mid
(\sigma,x')\in\gamma(x)\}$
is contained in the downset of
$\{(\sigma,M_{i}!\cdot (\delta^{(i)}(y')))\mid
(\sigma,y')\in\delta(y)\}$
(w.r.t.\ the product ordering on $\Act\times M_i1$). Since for
$0<j\le n+1$, we can express $M_j!\cdot\gamma^{(j)}(x)$ in terms of
the graded theory as
\begin{equation*}
  \textstyle\bigvee_{(\sigma,x')\in\gamma(x)}\sigma(M_{j-1}!\cdot \gamma^{(j-1)}(x')),
\end{equation*}
correspondingly for~$y$, this condition is, by the definition of the
ordering on $M_i1$, equivalent to
$M_j!\cdot\gamma^{(j)}(x)\le M_j!\cdot\delta^{(j)}(y)$ for $j\le n+1$,
as desired.

\myparagraph{Details on \hyperref[ssec:GPS]{Probabilistic Trace Equivalence}}

\noindent Recall that the key equation in the graded theory for probabilistic
trace equivalence is 
\begin{equation}\label{eq:dist-dist}
  \sigma(x \boxplus_p y) = \sigma(x) \boxplus_p \sigma(y).
\end{equation}
In the following we use an equivalent description of elements of
$\dist(\Act\times X)$ as formal sums
\( \sum_{i\in I} p_i \sigma_i(x_i) \)
where $I$ is an index set, $p_i\in [0,1]$ with $\sum_{i\in I}p_i = 1$,
$\sigma_i\in\Act$, $x_i\in X$. Further, with $p(x, \sigma, z)$ we
denote the probability of a $\sigma$-transition from state $x$ to
$z$. The function $p$ can be extended to words over $\Act$ by the
following inductive definition:
\begin{gather*}
  p(x, \epsilon, y) = \left\{
    \begin{array}{ll}
      1 &\text{ if } x = y\\
      0 &\text{ otherwise}
    \end{array}
  \right.
  \\
  p(x, \sigma w, z) = \sum_{y\in X} p(x, \sigma, y) \cdot p(y, w, z)
  \text{ for } \sigma\in\Act,w\in\Act^\star
\end{gather*}

\begin{defn}
  Let $X$ be a GPS. The \emph{probabilistic trace semantics} of a
  state $x \in X$ is the following formal sum
  \[
    \T(x) = \sum_{\substack{w\in\Act^\star\\y\in X}} p(x,w,y) w,
  \]
  which is a probability distribution if the sum is restricted to
  words $w\in\Act^n$, i.e.\ words of a fixed length $n$.
\end{defn}

The graded monad induced by the depth-$1$ theory for probabilistic
trace semantics given in \autoref{ssec:GPS} has the unit
$\unit=\unit^{\dist}$ and the multiplication (by
\autoref{thm:depth-1-graded-monads-M0}):
\begin{gather*}
  \mult^{00} = \mult^\dist \\
  \mult^{01} = \dist\dist(\Act\times\Id)
  \xrightarrow{\mult^{\dist}(\Act\times\Id)}\dist(\Act\times\Id) \\
  \mult^{10} = \dist(\Act \times \dist) \xrightarrow{\dist\tau_{\Act,\Id}}
  \dist\dist(\Act\times\Id) \xrightarrow{\mult^{\dist}(\Act\times\Id)} \dist(\Act\times\Id)
\end{gather*}
where $\tau$ is the strength $\tau_{A,B}\colon A\times\dist B \to
\dist(A\times B)$ defined by
\[
  \tau\Big( a, \sum_{i\leq n}p_ib_i\Big) = \sum_{i\leq n} p_i(a, b_i).
\]
The description of normal forms of terms over $X$ is as follows: For
depth $0$ terms are simply elements of $\dist$. At depth $n+1$, a
distribution over pairs of actions and depth-$n$ normal forms is (by
induction) normalized by pulling the actions below the top-level
distribution of the depth-$n$ normal forms (by \eqref{eq:dist-dist})
and then applying the multiplication of the $\dist$-monad; the
resulting normal forms, elements of
$M_{n+1}X = \dist(\Act^{n+1}\times X)$, are distributions over pairs
of words of length $n+1$ and elements of $X$.
Formally, this is described by
\begin{align*}
  & \mult^{1n}_X \Big( 
    \sum_{\substack{\sigma\in\Act\\y\in X}} p(x,\sigma,y) \sigma\big( 
      \sum_{\substack{w\in\Act^n\\z\in X}} p(y,w,z) w
    \big)
  \Big)
  \\
  = & \sum_{\substack{\sigma w\in\Act^{n+1}\\z\in X}} \sum_{y\in X}
  p(x,\sigma,y)\cdot p(y,w,z) \sigma w
  \\
  = & \sum_{\substack{\sigma w\in\Act^{n+1}\\z\in X}}p(x,\sigma w,z) \sigma w.
\end{align*}

This shows that the $\nth$ component of the $\alpha$-trace sequence of a
state is the probability distribution over words of length $n$ at the
given state and thus $\alpha$-trace equivalence recovers probabilistic
trace equivalence.

\subsection{Details for \autoref{sec:logics}}

\subsubsection{Trace Invariance}

For standard labelled transition systems, a property of states (or a
formula expressing it) is \emph{trace invariant} if the property is
closed under trace equivalence. We denote by $\trdiamond{\sigma}$ the
usual Hennessy-Milner-style diamond operator for an action $\sigma$.
Then the formula $\trdiamond{\sigma}\top\land\trdiamond{\tau}\top$
expressing the property that both $\sigma$ and $\tau$ are one-letter
traces of the present state is clearly trace invariant. However, the
formula
$\trdiamond{\sigma}(\trdiamond{\sigma}\top\land\trdiamond{\tau}\top)$
is not trace invariant. Indeed, it is satisfied by every state that
proceeds with $\sigma$ to a state which has traces $\sigma$ and
$\tau$, see the left-hand picture below:
\[
  \begin{tikzcd}
    &\bullet \arrow{d}{\sigma}
    \\
    & \bullet
    \arrow{ld}[swap]{\sigma}\arrow{rd}{\tau}
    \\
    \bullet && \bullet
  \end{tikzcd}
  \qquad\qquad
  \begin{tikzcd}
    &
    \bullet
    \arrow{ld}[swap]{\sigma}\arrow{rd}{\sigma}\\
    \bullet \arrow{d}[swap]{\sigma} && \bullet \arrow{d}{\tau} \\
    \bullet && \bullet
  \end{tikzcd}
\]
However, the root state of the right-hand transition system is trace
equivalent to the root of the left-hand system, but does not satisfy
the formula. This demonstrates that conjunction cannot be included as
a logical operator in a trace invariant logic, which is supposed to be
compositional.

\subsubsection{Proof of \autoref{lem:canonical}}
\begin{proof}
  Regarding the reflexivity comment, note that
  $\mu^{10}\cdot M_1\eta_{A_0}=\id_{M_1A_0}=M_1a^{00}\cdot
  M_1\eta_{A_0}$ (independently of canonicity).

  \emph{`If':} Let $B$ be an
  $M_1$-algebra, and let $f\colon A_0\to B_0$ be a morphism of
  $M_0$-algebras. We have to show that $f$ extends uniquely to an
  $M_1$-algebra morphism $A\to B$. We have $M_0$-algebra morphisms
  \begin{equation*}
    \begin{tikzcd}
      M_1A_0 \arrow{r}{M_1f} & M_1B_0 \arrow{r}{b^{10}} & B_1
    \end{tikzcd}
  \end{equation*}
  whose composite $b^{10}\cdot M_1f$ coequalizes $\mu^{10}$ and $M_1a^{00}$:
  \begin{align*}
    b^{10}\cdot M_1f\mu^{10} & = b^{10}\cdot \mu^{10}\cdot M_1M_0f && \by{naturality}\\
                       & = b^{10}\cdot M_0b^{00}\cdot M_1M_0f && \by{$M_1$-algebra}\\
                       & = b^{10}\cdot M_1f\cdot M_1a^{00} && \by{$M_0$-morphism.}
  \end{align*}
  By the coequalizer property, we thus obtain an $M_0$-morphism
  $f^\sharp\colon A_1\to B_1$ such that 
  \begin{equation*}
    \begin{tikzcd}
      M_1A_0 \arrow{r}{M_1 f} \arrow{d}[left]{a^{10}} & M_1B_0\arrow{d}{b^{10}}\\
      A_1 \arrow{r}[below]{f^\sharp} & B_1
    \end{tikzcd}
  \end{equation*}
  commutes; since moreover both~$f$ and $f^\sharp$ are $M_0$-algebra
  morphisms, this implies that $(f,f^\sharp)$ is an $M_1$-algebra
  morphism $A\to B$, and clearly the unique such morphism
  extending~$f$.

  \emph{`Only if':} Let $B$ be an $M_0$-algebra, and let
  $f\colon M_1A_0\to B$ be an $M_0$-algebra morphism such that
  $f\cdot \mu^{01}=f\cdot M_1a^{00}$. It is then immediate from the assumptions
  that $\bar B=(a^{00},b^{00},f)$ is an $M_1$-algebra (with carriers
  $\bar B_0=A_0$, $\bar B_1=B_0$). By canonicity of~$A$, there is a
  unique $M_0$-algebra morphism $g\colon A_1\to B_0$ such that the pair
  $(id_{A_0},g)$ forms an $M_1$-algebra morphism $A\to\bar B$:
  \begin{equation*}
    \begin{tikzcd}
      M_1 A_0 \arrow{r}{M_1\id_{A_0}} \arrow{d}[left]{a^{10}} &
      M_1 A_0 \arrow{d}{f} \\
      A_1 \arrow{r}{g} & B_0.
    \end{tikzcd}
  \end{equation*}
  This shows that $f$ factorizes uniquely through~$a^{10}$, proving
  the desired coequalizer property of~$a^{10}$.
\end{proof}

\subsubsection{Truth functions for $M_0 = \Pfin$ and $\Omega = 2$}

The $M_0$-morphisms $\Omega^k \to \Omega$ are the join-continuous
functions. We show that such functions are disjunctions of some of
their arguments. For $k = 0$, the only join-continuous function is
$\bot\colon 1 \to \Omega$ (preserving the empty join). For $k = 1$ we
have the constant function on $\bot$ and $\id\colon \Omega \to \Omega$
corresponding to the disjunction of none and its only argument,
respectively. For $k = 2$, a join-continuous function
$t\colon \Omega^2 \to \Omega$ is either constant on $\bot$, or we have
$t(\top,\top) = \top$, and then $t(\top,\bot)$, or $t(\bot,\top)$, or
both must be $\top$, corresponding to the disjunction of only the
first, only the second, or both arguments of $t$, respectively. In
general, a join-continuous function $t\colon \Omega^k \to \Omega$ is
completely determined by the values on the join-irreducibles in
$\Omega^k$ (i.e.\ those tuples $\bar b \in \Omega^k$ with precisely one
$\top$-component). So these join-irreducibles correspond to the arguments
of $t$, and $t$ is in fact the disjunction of those of its
arguments whose corresponding join-irreducibles $\bar b$ satisfy
$t(\bar b) = \top$.  

\subsection{Details for \autoref{sec:expr}}

\subsubsection{Full Proof of \autoref{thm:expr}}

  For readability, we restrict to unary modalities.  We have to show
  that for each~$n$, the set of evaluation functions
  $\Sem{\phi}\colon M_n1\to\Omega$ of depth-$n$ trace formulae~$\phi$ is
  jointly injective. We proceed by induction on~$n$. The base case
  $n=0$ is immediate by depth-$0$ separation. For the induction step
  from~$n$ to $n+1$, let~$\FA$ denote the set of evaluation maps
  $M_n1\to\Omega$ of depth-$n$ trace formulae. By the inductive
  hypothesis, $\FA$ is jointly injective. Moreover, by the
  construction of the logic,~$\FA$ is closed under all propositional
  operators in~$\PropOps$. By depth-$1$ separation, it follows that
  the set
  \begin{equation*}
    \{L(\Sem{\phi})\mid L\in\Lambda,
    \phi\text{ a depth-$n$ formula}\}
  \end{equation*}
  is jointly injective. These maps are the interpretations of
  depth-$(n+1)$ trace formulae of the form $L\phi$, which proves the
  inductive claim.

\subsubsection{Details for \autoref{expl:bisim-logics}}

Let~$F$ be a finitary set
    functor. Then the graded monad $M_nX=F^nX$ captures behavioural
    equivalence in $F$-coalgebras. In this setting, $M_0$-algebras are
    just sets, and modalities in the arising graded logic (again, unary
    to save on notation) are maps $L\colon F2\to 2$, which are equivalent to
    \emph{predicate liftings}, i.e.\ natural transformations
    $\contrapow\to\contrapow\circ F^\Op$~\cite{Schroder08}. Combining
    a set~$\Lambda$ of such modalities with full Boolean propositional
    logic leads to \emph{coalgebraic modal
      logic}~\cite{Pattinson04,Schroder08}. It is well-known that the
    coalgebraic modal logic determined by~$\Lambda$ is
    \emph{expressive}, i.e.\ distinguishes behaviourally inequivalent
    states in $F$-coalgebras, if the set~$\Lambda$ of modalities is
    \emph{separating}, i.e.\ each element $t\in FX$ is uniquely
    determined by the set
    $\{(L,f)\in \Lambda\times 2^X\mid L(Ff(t))=\top\}$.

    This fact becomes an instance of \autoref{thm:expr} as
    follows. An $M_1$-algebra is just a map of the form $FA_0\to A_1$,
    and canonical algebras are coequalizers of two identical maps into
    $FA_0$, hence isomorphisms, and then w.l.o.g.\ identities
    $\id\colon FA_0\to FA_0 = A_1$; for a map $f\colon A_0\to 2$ (since $M_0$-algebras
    are just sets, all maps are $M_0$-homomorphisms), we then have
    $\Sem{L}(f)=L\cdot Ff\colon FA_0\to 2$. To see that separation in the sense recalled
    above implies depth-$1$ separation, let $\FA$ be a jointly
    injective set of maps $A_0\to 2$, closed under all (finitary)
    Boolean operations.  We have to show that the set of maps
    $\{\Sem{L}(f)\mid L\in\Lambda,f\in\FA\}$ is again jointly injective. Let
    $t,s$ be distinct elements of $FA_0$. Since~$F$ is finitary, there
    exists a finite $X\subseteq A_0$ and (distinct) $s',t'\in FX$ such
    that $s=Fi(s')$, $t=Fi(t')$ where~$i$ is the injection
    $X\into A_0$. Since~$\Lambda$ is separating, we have $L\in\Lambda$
    and $f\colon X\to 2$ such that $\Sem{L}(f)(s')\neq \Sem{L}(f)(t')$.  Since $\FA$ is
    jointly injective and closed under Boolean operations, and~$X$ is
    finite, there exists $g\in\FA$ such that $f=g\cdot i$. Then $L(g)$
    separates~$s$ and~$t$.

\subsubsection{Details for \autoref{expl:ltbt-logics}}

\subsubsection*{Trace equivalence} Let~$A$ be a canonical
$M_1$-algebra; then the structure map $\Pfin(\Act\times A_0)\to A_1$ is
surjective, i.e.\ every element of~$A_1$ has the form
$\bigvee_{i\in I}\sigma_i(x_i)$, for $\sigma_i\in\Act$, $x_i\in
A_0$.
Since the operations~$\sigma$ are complete join semilattice morphisms,
we can in fact write every element of~$A_1$ in the form
$\bigvee_{\sigma\in\Act}\sigma (x_\sigma)$.

Now, to show depth-$1$ separation, suppose we have two distinct
elements of~$A_1$; by the above, these have the form
$x=\bigvee_{\sigma\in\Act}\sigma(x_\sigma)$ and
$y=\bigvee_{\sigma\in\Act}\sigma(y_\sigma)$, respectively, and thus
there must exist $\sigma\in\Act$ such that $x_\sigma\neq y_\sigma$;
w.l.o.g.\ $x_\sigma\not\le y_\sigma$.  Since the $f\in\FA$ preserve
joins, joint injectivity of~$\FA$ thus implies that there exists
$f\in\FA$ such that $f(x_\sigma)=\top$ and $f(y_\sigma)=\bot$. (To see
this, note that $x_\sigma\not\le y_\sigma$ implies that
$x_\sigma\vee y_\sigma\neq y_\sigma$, so there exists $f\in\FA$ such
that $f(x_\sigma\vee y_\sigma)\neq f(y_\sigma)$, and by monotonicity
of~$f$, we must have $f(x_\sigma\vee y_\sigma)=\top$ and
$f(y_\sigma)=\bot$. But then
$f(x_\sigma)=f(x_\sigma)\vee \bot = f(x_\sigma)\vee f(y_\sigma)=
f(x_\sigma\vee y_\sigma)=\top$.)
Now recall that the modal operator
$\Sem{\trdiamond{\sigma}}\colon\Pfin(\Act\times 2)\to 2$ is defined by
$\Sem{\trdiamond{\sigma}}(S)=\top$ if $(\sigma,\top)\in S$, and
$\Sem{\trdiamond{\sigma}}(S)=\bot$ otherwise. The commutativity of
\[
\begin{tikzcd}[column sep = 20mm]
  P(\Act \times A_0)
  \arrow{r}{\Pfin(\Act \times f)}
  \arrow{d}[left]{a^{10}}
  &
 \Pfin(\Act \times \Omega) \arrow{d}{\Sem{\trdiamond{\sigma}}}
  \\
  A_1 \arrow{r}{\Sem{\trdiamond{\sigma}}(f)}
  & A_0
\end{tikzcd}
\]
(an instance of~\eqref{diag:L(f)}) yields that, for
$\bigvee_{\sigma\in\Act}\sigma(z_\sigma)$ in~$A_1$, we have
\begin{equation*}
  \Sem{\trdiamond{\sigma}}(f)(\textstyle\bigvee_{\sigma\in\Act}\sigma(z_\sigma))=
  \Sem{\trdiamond{\sigma}}\big(
  \{(\sigma, f(z_\sigma)) \mid \sigma \in \Act\}
  \big) =
  \begin{cases}
    \top & \text{ if $f(z_\sigma)=\top$}\\
    \bot & \text{ otherwise.}
  \end{cases}
\end{equation*}
Thus, $\Sem{\trdiamond{\sigma}}(f)$ separates~$x$ and~$y$.



\subsubsection{Details for \autoref{expl:prob-trace}}

We present the proof of depth-$1$ separation in detail. So we assume
given a canonical $M_1$-algebra with carriers $A_0,A_1$, which we view
mainly in terms of its algebraic operations, and a jointly injective
set $\FA$ of convex algebra morphisms $A_0\to[0,1]$ that is closed
under convex algebra morphisms, i.e.\ affine maps; in fact, we need
only that~$\FA$ contains the constant map~$1$. Again, canonicity is
used only to obtain that every element of~$A_1$ can be written as a
convex combination of elements of the form $\sigma(x)$ for $x\in A_0$.
We normalize convex combinations to mention every element of the base
set exactly once (maybe with coefficient~$0$), and then write then as
distributions in the usual way. So take two distributions $\mu,\nu$ on
$\Act\times A_0$, and suppose that
$\Sem{\langle\sigma\rangle}(f)(\mu)=\Sem{\langle\sigma\rangle}(f)(\nu)$
for each $f\in\FA$ and each $\sigma\in\Lambda$; we have to show
$\mu=\nu$. Applying the assumption to $f=1$, we obtain that for
each~$\sigma$,
\begin{equation*}\label{eq:per-sigma}\textstyle
  \sum_{x\in A_0}\mu(\sigma,x)=\sum_{x\in A_0}\nu(\sigma,x)\eqqcolon c_\sigma.
\end{equation*}
Of course, $\sum_{\sigma\in\Act}c_\sigma=1$.  If $c_\sigma=0$, then
$\mu(\sigma,x)=\nu(\sigma,x)=0$ for all~$x$, so it suffices to
consider $\sigma$ with $c_\sigma>0$. Let $f\in\FA$. By assumption, we
have
\begin{equation*}\textstyle
  \sum_{x\in A_0}\mu(\sigma,x)f(x)=\sum_{x\in A_0}\nu(\sigma,x)f(x)
\end{equation*}
and hence
\begin{align*}\textstyle
  f(\sum_{x\in A_0}c_\sigma^{-1}\mu(\sigma,x)x) & = \textstyle\sum_{x\in A_0}c_\sigma^{-1}\mu(\sigma,x)f(x)\\
  & = \textstyle c_\sigma^{-1}\sum_{x\in A_0}\mu(\sigma,x)f(x) \\
  & = \textstyle c_\sigma^{-1}\sum_{x\in A_0}\nu(\sigma,x)f(x) \\
  & = \textstyle\sum_{x\in A_0}c_\sigma^{-1}\nu(\sigma,x)f(x)\\
  & = \textstyle f(\sum_{x\in A_0}c_\sigma^{-1}\nu(\sigma,x)x),
\end{align*}
using in the first and last step that~$f$ is a morphism of convex
algebras and the coefficients of the sum add up to~$1$ after the
normalization with~$c_\sigma^{-1}$. Since $\FA$ is jointly injective, we obtain
\begin{equation*}\textstyle
  \sum_{x\in A_0}c_\sigma^{-1}\mu(\sigma,x)x = \sum_{x\in A_0}c_\sigma^{-1}\nu(\sigma,x)x.
\end{equation*}
Applying the operation $\sigma$ to both sides and using that~$\sigma$
is a morphism of convex algebras, we obtain
\begin{equation*}\textstyle
  \sum_{x\in A_0}c_\sigma^{-1}\mu(\sigma,x)(\sigma(x)) = \sum_{x\in A_0}c_\sigma^{-1}\nu(\sigma,x)(\sigma(x)),
\end{equation*}
an equation in~$A_1$. Taking the convex combination of these equations
with coefficients $c_\sigma$, we obtain $\mu=\nu$, as required. \qed
\fi

\end{document}